\pdfoutput=1
\documentclass[acmlarge]{acmart}
\usepackage{bbm,nicefrac,bm}
\usepackage{fullpage}
\usepackage{xcolor}
\usepackage{graphicx}
\usepackage{subfig}
\usepackage{multirow}
\usepackage{abraces}
\usepackage{booktabs}
\usepackage{url}
\usepackage[abs]{overpic}
\usepackage{pdflscape}
\usepackage{afterpage}

\usepackage{tikz}
\tikzstyle myBG=[line width=3pt,opacity=1]

\newcommand{\drawPolarLinewithBG}[2]
{
  \draw[white,myBG]  (#1) -- (#2);
  \draw[black,very thick] (#1) -- (#2);
}

\newcommand{\stargraph}[2]{
    \node[circle,fill=black] at (360:0mm) (center) {};
    \foreach \n in {1,...,#1}{
        \node[circle,fill=black] at ({\n*360/#1}:#2cm) (n\n) {};
        \draw[very thick] (center)--(n\n);
        \node at (0,-#2*1.5) {}; 
    }
}

\usetikzlibrary{arrows,automata,positioning}
\tikzset{edge/.style = {->,> = latex'}}
\usepackage{tikz-qtree}
\tikzset{
      treenode/.style = {align=center, inner sep=0pt, text centered,
        font=\sffamily},
      arn_n/.style = {treenode, circle,black, draw=black, text width=1.5em}
    }
\usepackage{bbold}
\usetikzlibrary{decorations.markings}
\tikzstyle{vertex}=[circle, draw, inner sep=0pt, minimum size=6pt]

\DeclareMathOperator{\trace}{trace}
\DeclareMathOperator{\DLC}{DLC}
\DeclareMathOperator{\nDLC}{nDLC}

\newcommand{\set}[1]{\left\{ #1 \right\}}
\newcommand{\paren}[1]{\left(#1 \right)}
\newcommand{\one}{\mathbbm{1}}
\newcommand{\norm}[1]{\left\| #1 \right\|}
\newcommand{\R}{\mathbb{R}}
\newcommand{\half}{\nicefrac{1}{2}}
\newcommand{\tup}[1]{\left\langle #1 \right\rangle}
\newcommand{\tHLM}{THeLMa}

\theoremstyle{plain}
\newtheorem{theorem}{Theorem}
\newtheorem{lemma}[theorem]{Lemma}
\theoremstyle{definition}
\newtheorem*{define}{Definition}

\newcommand{\Remove}[1]{}
\colorlet{GREEN}{green} 
\colorlet{BLACK}{black}

\graphicspath{ {./} }

\usepackage{soul}


\title{Directional Laplacian Centrality for Cyber Situational Awareness}
\author{Sinan G.~Aksoy}
\email{sinan.aksoy@pnnl.gov}
\author{Emilie Purvine}
\email{emilie.purvine@pnnl.gov}
\author{Stephen J.~Young}
\email{stephen.young@pnnl.gov}
\affiliation{~Pacific Northwest National Laboratory}

\begin{CCSXML}
<ccs2012>
   <concept>
       <concept_id>10002950.10003624.10003633.10003645</concept_id>
       <concept_desc>Mathematics of computing~Spectra of graphs</concept_desc>
       <concept_significance>500</concept_significance>
       </concept>
 </ccs2012>

<ccs2012>
   <concept>
       <concept_id>10002978.10002997</concept_id>
       <concept_desc>Security and privacy~Intrusion/anomaly detection and malware mitigation</concept_desc>
       <concept_significance>500</concept_significance>
       </concept>
 </ccs2012>
\end{CCSXML}

\ccsdesc[500]{Security and privacy~Intrusion/anomaly detection and malware mitigation}
\ccsdesc[500]{Mathematics of computing~Spectra of graphs}

\begin{document}
\begin{abstract}
Cyber operations is drowning in diverse, high-volume, multi-source data. In order to get a full picture of current operations and identify malicious events and actors analysts must see through data generated by a mix of human activity and benign automated processes. Although many monitoring and alert systems exist, they typically use signature-based detection methods. We introduce a general method rooted in {\it spectral graph theory} to discover patterns and anomalies without {\it a priori} knowledge of signatures. We derive and propose a new graph-theoretic centrality measure based on the derivative of the graph Laplacian matrix in the direction of a vertex. To build intuition about our measure we show how it identifies the most central vertices in standard network data sets and compare to other graph centrality measures. Finally, we focus our attention on studying its effectiveness in identifying important IP addresses in network flow data. Using both real and synthetic network flow data, we conduct several experiments to test our measure's sensitivity to two types of injected attack profiles, and show that vertices participating in injected attack profiles exhibit noticeable changes in our centrality measures, even when the injected anomalies are relatively small, and in the presence of simulated network dynamics.

\end{abstract}
\maketitle
\section{Introduction}
Cyber situational awareness comprises many tasks from understanding the threat landscape and network vulnerabilities to real-time network monitoring \cite{jajodia2009cyber}.
Endsley describes general situational awareness of dynamic systems in terms of three levels of awareness: perception, comprehension, and projection (or prediction) \cite{endsley1995toward}.
In order to achieve any of perception, comprehension or prediction analysts must navigate a variety of data sources and a deluge of data within each source in order to get an understanding of current operational posture.
Fusing information derived from these data sources in a way that is interpretable to a human analyst provides a holistic situational awareness.
In this paper we focus on a specific aspect of perception: network anomaly attribution.

One of the difficult problems of perception in cyber systems is the ability to see through real-time network traffic data produced by benign automated processes and normal human activity to pick out abnormal, potentially malicious, events and actors.
Network traffic data is often messy and massive, requiring assistance from algorithms or analytics in order to provide the human analyst with tips and cues for further investigation.
Many of the existing monitoring and alert systems use signature-based detection techniques, and are highly tailored to specific data types.
In order to detect zero-day attacks and other previously unseen tactics, techniques, and procedures a signature-agnostic anomaly detection strategy must be utilized.
Broadly speaking, two driving challenges are: (1) identifying the presence of anomalous or adversarial behavior within the normal background variation of network data; and (2) identifying which actors or agents within a  system are participating in the anomalous behavior. 
In this regard, a {\it mathematically grounded} approach is helpful to explore data and discover patterns and anomalies without prior knowledge of behaviors of interest.
While there are approaches that attempt to simultaneously address both challenges~\cite{nevat2017anomaly,pitropakis2018enhanced}, in this work we focus on developing a framework for addressing anomaly attribution using a method based on the {\it spectrum} of the network's graph structure. 
In addition to these two driving challenges there is also the question of determining maliciousness of an anomaly. 
The work we propose here is not meant to answer this question but rather to identify the actors or agents participating in anomalous behavior. 
Subsequent packet inspection and other investigation activities would be necessary to determine the maliciousness of the anomaly.

As a proxy for network structure, and to provide grounding for this work in commonly available network data, we will focus on network flow, a summary of computer to computer communications across the network. 
The method we propose is applicable to any data with records of the form ({\it source, destination, time, metadata}), not just within the cyber domain, although interpretation of the method's output will vary by data type.
Our approach uses a graph to model a set of network flow records, typically from a small time window.
In a spectral approach the structural properties of network communication, as reflected in a graph derived from flow, are analyzed using eigenvalues of associated matrices; these eigenvalues are known as the {\it graph spectrum}. 
The temporal sequence of flow data is then analyzed using a sequence of small time windows, each on the order of a few minutes. 
The spectrum of each small time window is computed separately providing a numerical graph spectrum vector for each window.
When stitched together over time this creates a timeline or heartbeat of the network flow data that can be inspected by an analyst to look for patterns and deviations from those patterns. 
See Figure \ref{fig:ex_timeline} for an example visualization of synthetic network flow with a scan-like anomaly around minute 30.
While this tool has shown promise in helping to identify ``low and slow'' scans (that is, scans that are not high volume all at once, but rather distributed across a period of time) as well as frequent recurrent behavior in operational data, the process requires a human in the loop to inspect the timeline and identify the actor responsible for the spectral anomaly, which can be error-prone.

\begin{figure}
\centering
\includegraphics[scale=0.5]{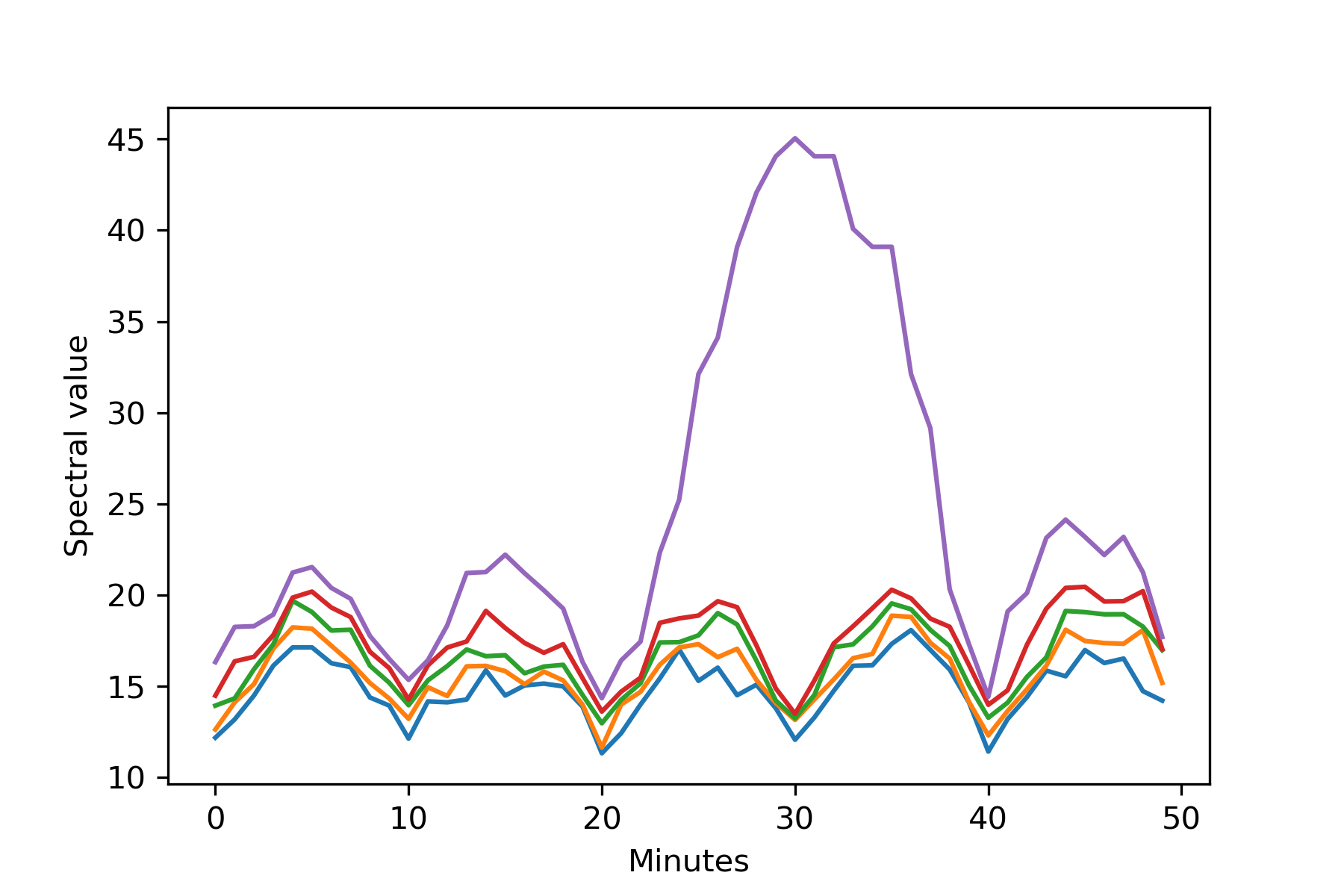}
\caption{Example spectral timeline with scan-like anomaly around minute 30.}
\label{fig:ex_timeline}
\end{figure}

This paper proposes a tool that we call the Directional Laplacian Centrality to identify ``important'' Internet Protocol (IP) addresses within each time window and focus analyst attention from the broad temporal awareness that the spectrum itself may provide to more specific attribution of anomalies.
The underlying philosophy is that an IP is important if it has a significant (measurable) contribution to the spectrum of the associated graph.
Anomalies can then be characterized as times in which a previously unimportant IP becomes significantly more important. 
After defining the Directional Laplacian Centrality we will test the method within real and synthetic data, analyzing how the importance of IP addresses change in response to planted attack profiles. 

We organize our work as follows. In Section \ref{sec:prelims}, we provide the necessary preliminaries to introduce network flow data and spectral graph theory, justify the advantages of taking a spectral approach, and provide a literature review of other relevant importance measures. 
In Section \ref{sec:specImp}, we derive and define our proposed importance measure, Directional Laplacian Centrality.
In Section \ref{sec:applications}, we provide some intuition through examples of Directional Laplacian Centrality applied to well-known graph data sets.
In Section \ref{sec:exp}, we describe our specific test data, experimental setup, and report our findings. 

\section{Preliminaries}\label{sec:prelims}
A graph $G=(V,E)$ is a set $V$ of elements, called {\it vertices}, and a set $E$ of unordered pairs of vertices, called {\it edges}.
If $\{u,v\}\in E$, we say $u$ and $v$ are {\it adjacent} and write $u \sim v$. 
We call $\{v \in V: u \sim v\}$ the {\it neighborhood of $u$}, and $d_u=|\{v \in V: u \sim v\}|$ the {\it degree} of $u$. 
A set of vertices $S \subseteq V$ is {\it connected} if for any $u,v \in S$, there exists a sequence of adjacent vertices $u,\dots,v$. 
A maximal connected subset (one which is not strictly contained within a larger subset) is called a {\it connected component} and its {\it size} is the number of vertices in the component, $|S|$.
For other basic graph theory terminology, readers are referred to \cite{bondy1976graph}. 

In this work, for ease of exposition and simplicity, we focus on unweighted, undirected graphs. 
In most applications, however, the edges of the graph may have natural weight or directionality information. 
The generalizations of the techniques developed in Section \ref{sec:specImp} to weighted and directed cases (as appropriate) is a straightforward exercise. 

\subsection{Network Flow Data}\label{sec:data}
Network flow summarizes data exchanged between pairs of IP addresses in a network.
A single network flow is an aggregation of multiple packets that occur within a small time window and have the same source IP, destination IP, source port, destination port, and protocol.
There is significant subtlety and engineering that goes into packet capture and flow aggregation (for example, even identifying which IP is source and which is destination is not trivial), as seen in one of the seminal papers \cite{claffy1995parameterizable} and the more recent \cite{hofstede2014flow}.
However, for this paper it is enough to think of a single flow as a record that contains a source and destination IP, and a time stamp. 
Other metadata are present in a record (e.g., bytes, packets, ports, and protocol) which one could use to weight or filter the data.
Initially created for accounting and usage profiling, network flow is a prevalent source of data for cyber situational awareness. Advances in flow monitoring are continuing to enable this usage \cite{jirsik_cyber_2020}. 
A recent paper of Moustafa, Hu, and Slay provides a thorough survey of network anomaly detection approaches using the paradigm of feature identification followed by classification, clustering, machine learning, rule-based, or statistical methods \cite{moustafa2019holistic}.
Our approach fits this paradigm with features derived from a graph interpretation of the network flow data followed by a statistical anomaly detection based on those derived numerical features.

A set of network flow can be modeled as a graph in which vertices represent IP addresses and an edge $\{u, v\}$ indicates that a flow between IPs $u$ and $v$ is contained in the set.
Although one could restrict to network flows with specific ports or protocols for a more targeted analysis, in our experiments we will not filter or label vertices and edges by ports or protocols but will consider all flows.
To construct each graph we restrict to a specific time interval in order to capture the structure of communications within a given time period. However,
network flow gathered from even a few hours typically creates a graph much too large to yield helpful insights, so in order to represent time and capture structural evolution we consider a dynamic graph model. 
A {\it dynamic graph} is a sequence of graphs $\{G_t\}_{t=1}^N$ where $t$ denotes a specific time, either instantaneous or a short time window, and $G_t = (V_t, E_t)$.
Dynamic graphs in which $t$ represents an instantaneous time can be difficult to analyze since at any given instance the graph may be quite sparse and fragmented; and in the case of network flow, they can evolve very rapidly as some flows have duration on the order of milliseconds.
Instead we consider small time windows, on the order of 30 seconds to 10 minutes, and gather all flow records that overlap the time window into a single graph, $G_t$.
For a dynamic graph created from network flow an edge exists in $G_t$ between two IPs when a flow occurs with one as source and the other as destination within the time window indicated by $t$. 
The vertices of $G_t$ are IP addresses present in at least one flow within the time window.
Many of the examples in this paper will be on a fixed snapshot of network flow data, a single $G_t$, see Section \ref{sec:exp} for details of that graph.

\subsection{Spectral Graph Theory}\label{SS:spectral}

Our methods for identifying potential anomalous actors within networks are rooted in {\it spectral graph theory}. 
This area of mathematics studies properties of a graph through the eigenvalues and eigenvectors of matrices associated with the graph, as illustrated in Figure \ref{fig:specTheory}. 
\begin{figure}[b]
    \centering
    \includegraphics[scale=0.3]{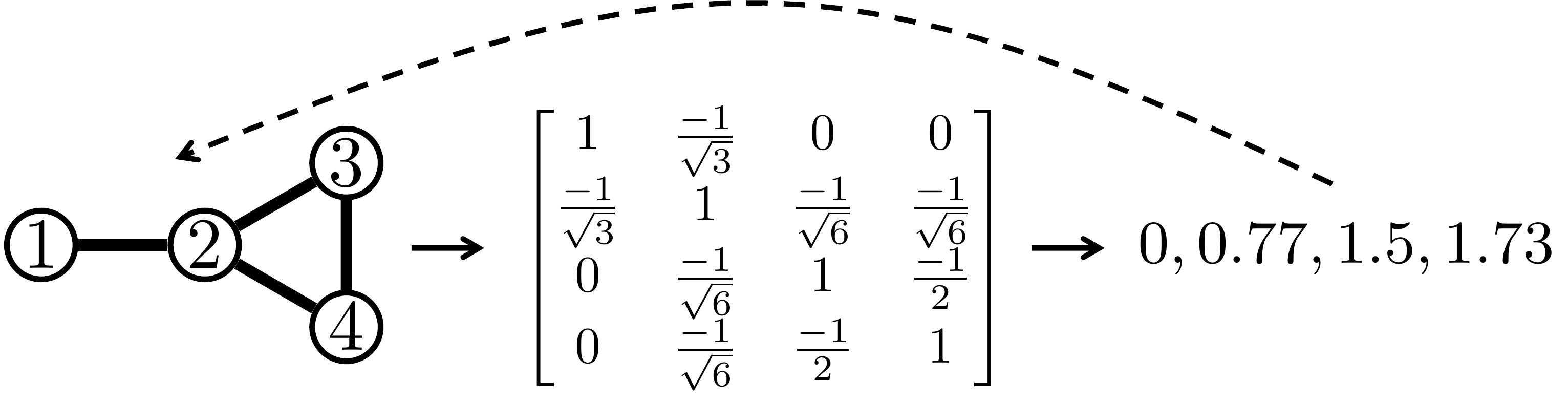}
    \caption{From left to right: a graph, its normalized Laplacian matrix, and normalized Laplacian eigenvalues. Spectral graph theory establishes relationships between eigenvalues and the graph they are derived from, represented in the diagram by the dashed line.}
    \label{fig:specTheory}
\end{figure}
There are a variety of different matrices one could associate with a graph. 
Perhaps the most basic and well-known, the adjacency matrix $A$ of an $n$-vertex graph $G=(V,E)$ is an $n \times n$ matrix where
\[
A_{ij}=\begin{cases}
1 & \mbox{ if } \{i,j\} \in E \\
0 & \mbox{ otherwise }
 \end{cases}.
\]
Two other important matrices, which will be the focus of this work, are the combinatorial Laplacian matrix, $L$ and normalized Laplacian matrix $\mathcal{L}$,
\begin{align*}
L &= D-A, \\
\mathcal{L}&=D^{-1/2}LD^{-1/2},
\end{align*}
where $A$ is the adjacency matrix defined above and $D$ is the diagonal matrix with vertex degrees on the diagonal, i.e., $D=\mbox{diag}(A{\bf 1})$, where ${\bf 1}$ is the appropriately sized all-ones vector. The matrices $A$, $L$, and $\mathcal{L}$ are all symmetric, and hence have real eigenvalues which we label in increasing order:
\[
\lambda_1 \leq \lambda_2 \leq \dots \leq \lambda_n.
\]
When unclear from context, we specify the matrix underlying a particular eigenvalue by writing, for example, $\lambda_i(L)$ denotes that $\lambda_i$ is the $i$th eigenvalue of the combinatorial Laplacian.

Unlike $A$, both $L$ and $\mathcal{L}$ are positive semi-definite, and therefore have non-negative eigenvalues. 
The eigenvalues, or {\it spectrum}, of $L$ and $\mathcal{L}$ characterize a number of important properties not captured by the adjacency eigenvalues. Particularly pertinent to our analyses, Laplacian spectra capture various graph connectivity and neighborhood expansion properties. Loosely speaking, such properties quantify the extent to which a graph (or subsets of vertices within a graph) are well-connected.  
For instance, an elementary fact is that the multiplicity of the eigenvalue 0 of both $L$ and $\mathcal{L}$ equals the number of connected components of the graph. 
Furthermore, the second eigenvalue $\lambda_2$ of a connected graph quantifies the extent to which that graph is well-connected, in several different regards.  

In the case of the combinatorial Laplacian $L$, this second eigenvalue is referred to as {\it algebraic connectivity}. 
This quantity is related to the graph's vertex connectivity, the minimum number of vertices that must be deleted to disconnect a graph.  
Furthermore, the corresponding eigenvector, called Fiedler's vector, is frequently used to partition a graph into well-connected groups; see \cite{fiedler1975property} for more. 
In the case of the normalized Laplacian, the second eigenvalue captures neighborhood expansion properties, which measure how many edges leave a set of vertices, relative to the ``volume" of that set. 
One way of quantifying such expansion is via Cheeger's constant. 
More precisely, for a vertex subset $X \subseteq V$, letting $e(X,\overline{X})$ denote the number of edges between $X$ and its complement $\overline{X}$ and defining the volume of a set as $\mbox{vol}(X)=\sum_{i \in X}d_i$, the {\it Cheeger ratio} of $X$ is 
\[
\Phi(X)=\frac{e(X,\overline{X})}{\min\{\mbox{vol}(X),\mbox{vol}(\overline{X})\}}.
\]
The {\it Cheeger constant} of the graph $\Phi(G)$ is the minimum Cheeger ratio over all vertex subsets. 
Via Cheeger's inequality \cite{chung1997spectral}, $\lambda_2(\mathcal{L})$ serves as an approximation of the Cheeger's constant,
\[
2\Phi(G) \geq \lambda_2(\mathcal{L}) \geq \frac{\Phi(G)^2}{2}.
\]
In this sense, the second eigenvalue controls graph neighborhood expansion properties. Further, just as the aforementioned Fiedler vector may be used to partition a graph, the normalized Laplacian eigenvector corresponding to $\lambda_2(\mathcal{L})$ can be used to partition the graph as well. 

While the aforementioned connectivity properties were key to the early development of spectral graph theory, researchers have shown a wide variety of graph structural properties are encoded in eigenvalues. 
For example, vertex degrees \cite{das2005sharp}, average shortest path length \cite{mohar1991eigenvalues}, diameter \cite{chung1989diameters}, chromatic number \cite{wilf1967eigenvalues}, independence number \cite{godsil2008eigenvalue}, number of spanning trees \cite{boesch1986spanning}, network flows\footnote{The term ``network flow'' here is different than network flow data used in this paper. Here it refers to quantities flowing through a weighted and directed graph.} and routing \cite{alon1994routing}, and random walk mixing time and associated parameters \cite{aldous1995reversible}, can all be bounded, controlled, or characterized in terms of eigenvalues. 
This body of research attests to eigenvalues of the aforementioned matrices serving as a far-ranging tool for capturing graph properties. 

\subsection{Importance Measures}
One approach towards identifying agents participating in anomalous behavior within a graph is to understand their role within the graph structure relative to other agents and how that role changes over time. 
In this context, graph-theoretic importance measures provide different ways of quantifying an element's role in the system. Tracking the importance over time, through a dynamic graph sequence, can identify anomalous changes in behavior or role. 
Here, we briefly review examples of importance measures to elucidate the breadth of properties they can capture, and to place our proposed importance measure within the literature. 

Broadly speaking, one class of graph importance measures are based on the shortest-path structure in the graph. 
For example {\it closeness centrality} ranks vertices based on their average shortest path length to other vertices in the graph: vertices in close proximity to the rest of the graph via shorter paths are ranked more highly than vertices on the periphery. 
{\it Betweeness centrality} measures the importance of a vertex based on the frequency of its occurrence {\it en route} between other pairs of vertices. 
Accordingly, vertices that belong to many short paths linking other pairs of vertices are ranked highly. 
Rather than restrict attention to {\it shortest} paths, another related class of importance measures are based more generally on {\it walks}, which may be non-minimal in length. 
Since a pair of vertices may be linked by infinitely many walks of arbitrary length, such measures tend to rely on asymptotic expressions, such as the limiting distribution of a random walk. 
For instance, the well-known PageRank algorithm assigns scores to vertices based on the stationary distribution of a modified random walk. 
Katz centrality considers the total number of walks from a given vertex to the rest of the graph, penalizing longer walks with an attenuation factor. 

\begin{figure}[h]
    \centering
    \includegraphics[scale=0.22]{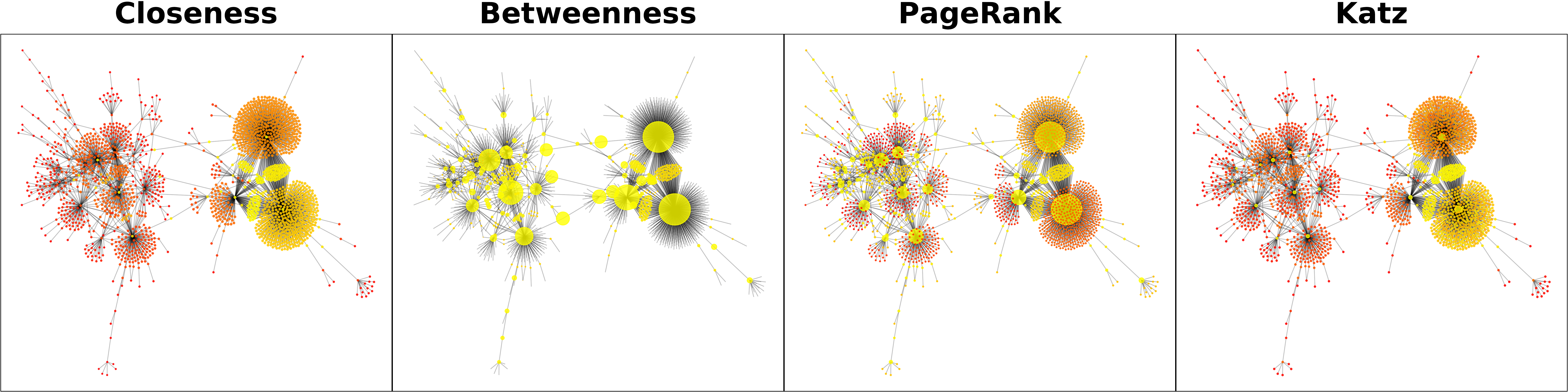}
    \caption{Different centrality scores on the same graph. Vertex size is proportional to a normalized centrality score, and color is proportional to score percentile, with yellow for higher percentiles and red for lower percentiles.}
    \label{fig:centEx}
\end{figure}

Figure \ref{fig:centEx} illustrates these different centrality scores on a single $G_t$ from a dynamic graph of network flow. 
The specific source and snapshot of the data is given in Section \ref{sec:exp} where we describe our experiments in detail but the structure is as outlined in Section \ref{sec:data}.
In Figure \ref{fig:centEx} vertices are sized proportionally\footnote{To make the vertex sizes comparable across centrality measures, the scores within each centrality measure are divided by the sum of all the scores for that centrality measure.} to their centrality score, and colored according to that score's percentile in the graph.
Comparing them, we see how seemingly subtle differences in centrality definitions can yield drastically different scores. 
For betweeness centrality, we observe a heavily skewed score distribution. 
This is partly due to the prevalence of pendant (i.e. degree 1 vertices), which cannot be on a shortest path between two other vertices, and hence receive a betweeness score of 0. 
In contrast, the closeness centrality scores are much more uniform across the graph. 
The high-degree or ``hub'' vertices are not well-distinguished from the pendant vertices because their distances, on average, to the rest of the graph are comparable. 
Turning our attention to the two walk-based measures, PageRank and Katz centrality\footnote{Here, PageRank is computed with damping parameter $\alpha=0.85$, and Katz centrality is computed with attenuation parameter $\alpha=\frac{1}{2\lambda_{\max}}$ where $\lambda_{\max}$ is the adjacency matrix spectral radius of $G$.}, we see that both better distinguish hub from pendant vertices. 
PageRank assigns relatively larger scores to hubs than Katz centrality. 
Taking a broader view, this example illustrates the importance of understanding the foundations of any proposed centrality metric in order to interpret its behavior. 

In Section \ref{sec:specImp}, we will propose and justify our own centrality score based in spectral graph theory. 
While the aforementioned centrality measures are implicitly related to eigenvalues, via the relationship of eigenvalues to random walks and walk-counting noted above, the measures we will consider are {\it explicitly} derived from eigenvalues and eigenvectors. 
In this vein, our work will build off prior research by Qi \cite{qi2012laplacian}, who defined a spectral importance measure they call {\it Laplacian centrality}. 
There, Qi defines vertex importance based on its contribution to the entire spectrum, as measured by the {\it Laplacian energy}:
\[
\mathcal{E}(G)=\sum_{i=1}^n \lambda_i^2.
\]
More precisely, the Laplacian centrality of a vertex $v$ is the percent change in Laplacian energy from deleting $v$, i.e.
\[
\frac{\mathcal{E}(G)-\mathcal{E}(G\setminus v)}{\mathcal{E}(G)},
\]
where $G\setminus v$ denotes the graph formed by deleting $v$ and all of its edges from $G$. 
We note Laplacian centrality is always non-negative \cite{qi2012laplacian}. 
We will return to and visualize Laplacian centrality alongside our proposed notion of centrality at the end of Section \ref{sec:specImp}.

\section{Directional Centrality Measures} \label{sec:specImp}

As we have noted in Section \ref{SS:spectral} functions of the various spectra of the graph can encode a variety combinatorial properties of the underlying graph. 
This motivates using spectral information to measure the importance of particular vertices within that structure. 
Perhaps the most natural way to evaluate the importance of a particular vertex to a spectral function is the aforementioned approach taken by Qi et al: remove the vertex from the graph, recalculate the spectrum and the associated function and record the change.

However, in some cases deleting a vertex from a graph may induce dramatic changes to the graph structure that disproportionately affect the spectrum. 
For example, recall from Section \ref{SS:spectral} that the Cheeger constant, a measure of neighborhood expansion, is closely tied to the second eigenvalue of the normalized Laplacian. 
Consequently, the deletion of {\it any} vertex that disconnects a graph -- including, for example, vertices whose removal only isolates a single vertex from the rest of the graph -- converts the Cheeger constant and second eigenvalue to 0.
Other graph parameters we surveyed in Section \ref{SS:spectral} may also be changed significantly by the deletion of a vertex that, in turn, induces jump-shifts in the spectrum. 
For this reason, assessing importance based on outright vertex deletion outcomes may be an insufficiently nuanced approach. 
Instead, we propose measuring importance using a function of the spectrum based on an {\it infinitesimal} change in the graph structure. More formally, we consider the derivative of an eigenvalue in the direction of a vertex. 
In the remainder of this section, we develop this notion and use it to define a more targeted notion of Laplacian centrality that we call {\it Directional Laplacian Centrality}. 
Following this technical derivation we will build overall intuition for the behavior of Directed Laplacian Centrality by applying it to a variety of well-studied networks in Section \ref{sec:applications}.  We will then perform experiments using real and synthetic network flow data in Section \ref{sec:exp} to show how the Directional Laplacian Centrality changes in the presence of two types of planted anomalies and indicate how it could be used by an analyst to facilitate situational awareness.

\subsection{Eigenvalue Directional Derivative}\label{SS:eigen}
Before we develop the notion of derivative of an eigenvalue in the direction of a vertex, we first recall some basic facts about the dependence of an eigenpair on the entries of a matrix, (see for instance \cite{Magnus:Eigendiff}). 

\begin{theorem}
Let $X_0$ be an real-symmetric matrix in $\R^{n \times n}$ and let $(\lambda_0,v_0)$ be an associated eigenpair such that $\norm{v_0} = 1$.  If $\lambda_0$ is a simple eigenvalue then there is a neighborhood  $N(X_0)$ and functions $\lambda \colon N(X_0) \rightarrow \R$ and $v \colon N(X_0) \rightarrow \R^n$ such that for all $X \in N(X_0)$
\begin{enumerate}
    \item $\lambda(X_0) = \lambda_0$,
    \item $v(X_0) = v_0$,
    \item $\norm{v(X)} = 1$, and 
    \item $Xv(X) = \lambda(X)v(X).$
\end{enumerate}
Furthermore, $\lambda$ and $v$ are infinitely differentiable on $N(X_0)$.  
\end{theorem}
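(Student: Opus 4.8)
The plan is to recast the eigenpair relation as the vanishing of a smooth map and then invoke the implicit function theorem. Define
\[
F \colon \R^{n\times n} \times \R^n \times \R \to \R^n \times \R,
\qquad
F(X,v,\lambda) = \paren{Xv - \lambda v,\ v^\top v - 1}.
\]
This $F$ is a polynomial map, hence $C^\infty$ (in fact real-analytic), and $F(X_0,v_0,\lambda_0) = 0$ by the standing hypotheses. If the partial derivative of $F$ with respect to $(v,\lambda)$ at $(X_0,v_0,\lambda_0)$ is an invertible linear map $\R^n\times\R \to \R^n\times\R$, then the implicit function theorem supplies a neighborhood $N(X_0)$ and infinitely differentiable maps $v(\cdot)$, $\lambda(\cdot)$ on $N(X_0)$ with $v(X_0)=v_0$, $\lambda(X_0)=\lambda_0$, and $F(X,v(X),\lambda(X))=0$ for all $X\in N(X_0)$. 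Reading off the two components of $F=0$ then gives exactly items (1)--(4): the scalar component is the normalization $\norm{v(X)}=1$, and the vector component is $Xv(X)=\lambda(X)v(X)$; that $\lambda(X)$ is a genuine eigenvalue follows since $v(X)\neq 0$.

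So the real content is checking invertibility of the differential. A direct computation gives, for $(w,\mu)\in\R^n\times\R$,
\[
D_{(v,\lambda)}F(X_0,v_0,\lambda_0)\bigl[w,\mu\bigr] = \paren{(X_0-\lambda_0 I)w - \mu v_0,\ 2v_0^\top w}.
\]
Since this is an endomorphism of the $(n+1)$-dimensional space $\R^n\times\R$, it suffices to show its kernel is trivial. Suppose $(X_0-\lambda_0 I)w=\mu v_0$ and $v_0^\top w=0$. This is where simplicity of $\lambda_0$ and symmetry of $X_0$ enter: for real-symmetric $X_0$ with $\lambda_0$ simple we have $\ker(X_0-\lambda_0 I)=\operatorname{span}\set{v_0}$ and hence $\operatorname{range}(X_0-\lambda_0 I)=\set{v_0}^\perp$. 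Thus $\mu v_0 = (X_0-\lambda_0 I)w$ lies in $\set{v_0}^\perp$, forcing $\mu\norm{v_0}^2=0$, so $\mu=0$; then $w\in\ker(X_0-\lambda_0 I)$, i.e.\ $w=cv_0$, and $0=v_0^\top w=c\norm{v_0}^2$ gives $c=0$, so $w=0$. Hence the differential is invertible.

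I expect this kernel computation — and recognizing that it is the \emph{only} place where the hypotheses genuinely enter — to be the main point; everything else is routine bookkeeping with the implicit function theorem. In the final write-up I would also record two small observations: first, one can upgrade ``infinitely differentiable'' to ``real-analytic'' at no extra cost by using the analytic implicit function theorem, since $F$ is polynomial; second, an alternative derivation proceeds through the Riesz spectral projection $\frac{1}{2\pi i}\oint(zI-X)^{-1}\,dz$ along a small circle enclosing $\lambda_0$ but no other eigenvalue, which is holomorphic in $X$ near $X_0$ and has one-dimensional range there, but the implicit function theorem route is shorter and more self-contained.
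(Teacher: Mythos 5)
Your proof is correct and complete. Note that the paper itself does not prove this theorem at all --- it is stated as a known fact and cited to the reference on eigenvalue differentiation --- so there is no in-paper argument to compare against; your implicit-function-theorem derivation is the standard one underlying that citation. The one step that carries all the weight, and which you handle correctly, is the kernel computation for $D_{(v,\lambda)}F$: you correctly use that for a real-symmetric $X_0$ the range of $X_0-\lambda_0 I$ is $\set{v_0}^\perp$ (this is where symmetry, not just simplicity, is essential --- for a non-normal matrix with a simple eigenvalue one would instead need the left eigenvector to be non-orthogonal to $v_0$), which forces $\mu=0$ and then $w=0$. Your closing remarks on real-analyticity and the Riesz-projection alternative are accurate but not needed for the statement as given.
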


In particular, this implies that if $(\lambda,v)$ is a simple eigenpair associated with a real-symmetric matrix $A \in \R^{n\times n}$, then for any other matrix $B \in \R^{n\times n}$, the matrix function $M(t) = A + tB$ defines, via the implicit function theorem, a pair of functions $(\lambda(t),v(t))$ which are  infinitely differentiable in a neighborhood, $N$, of $0$ and such that $\norm{v(t)} = 1$ and $M(t)v(t) = \lambda(t)v(t)$  for all $t \in N$.  If we think of $A$ as the adjacency matrix of a graph and $B$ as the collection of edges incident to a vertex $x$, this allows us to naturally define a notion of a directional derivative in the direction of the vertex $x$ for any simple eigenpair.  Specifically, recalling that given a parameterized matrix $M(t)$ and the associated eigenpair functions $(\lambda(t),v(t))$, we have that 
\[ \frac{d\lambda}{dt}(t_0) = v(t_0)^T \frac{dM}{dt}(t_0) v(t_0),\] 
and thus, the directional derivative in the direction of $x$ of $(\lambda,v)$ for the adjacency matrix is given by $v_x \sum_{y \sim x} v_y.$    Using this framework, we will define the directional derivative for eigenpairs of the both the combinatorial and normalized Laplacian.

\begin{lemma}\label{L:derivative}
Let $G = (V,E)$ be a simple graph and let $x$ be an arbitrary vertex.  If $(\lambda,v)$ is an eigenpair of the combinatorial Laplacian $L$ of $G$, then the derivative of $\lambda$ in the direction $x$ is given by 
\[ \sum_{y \sim x} \paren{v_x - v_y}^2.\]  If instead, $(\lambda,v)$ is an eigenpair of the normalized Laplacian of $G$, then the derivative of $\lambda$ in the direction $x$ is given by 
\[(1-\lambda) \sum_{y \sim x} \paren{ \frac{v_x}{\sqrt{d_x}} - \frac{v_y}{\sqrt{d_y}}}^2 - \lambda \sum_{y \sim x} \frac{2v_xv_y}{\sqrt{d_xd_y}}.\]
\end{lemma}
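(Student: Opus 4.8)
The plan is to realize ``the derivative of $\lambda$ in the direction of the vertex $x$'' through the one-parameter matrix family that scales up the edges incident to $x$, and then apply the directional-derivative formula $\tfrac{d\lambda}{dt}(0) = v^T M'(0)\,v$ quoted above. Concretely, let $e_i$ denote the $i$th standard basis vector, let $B = \sum_{y\sim x}\paren{e_xe_y^T + e_ye_x^T}$ be the adjacency matrix of the star centered at $x$, let $\Delta = \mathrm{diag}(B\mathbf{1})$ (so $\Delta_{xx} = d_x$, $\Delta_{yy}=1$ for $y\sim x$, and $0$ otherwise), and let $L_x = \Delta - B = \sum_{y\sim x}\paren{e_x-e_y}\paren{e_x-e_y}^T$, the combinatorial Laplacian of that star. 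I would then set $A(t) = A + tB$, $D(t) = D + t\Delta = \mathrm{diag}\paren{A(t)\mathbf{1}}$, $L(t) = D(t)-A(t) = L + tL_x$, and $\mathcal{L}(t) = D(t)^{-1/2}L(t)D(t)^{-1/2}$; each agrees with the corresponding matrix of $G$ at $t=0$, so the Theorem supplies smooth functions $\lambda(t)$ and a unit eigenvector, and $\tfrac{d\lambda}{dt}(0) = v^T M'(0)\,v$ for the relevant $M$. (If $\lambda$ is not simple I would instead appeal to the standard analytic perturbation theory for symmetric matrices to choose a smooth eigenvector branch; the computations below are unchanged.)

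For the combinatorial Laplacian the argument is then immediate: $M'(0) = L_x$, so $\tfrac{d\lambda}{dt}(0) = v^T L_x v = \sum_{y\sim x}\paren{v^T(e_x-e_y)}^2 = \sum_{y\sim x}\paren{v_x - v_y}^2$. The only point requiring care is that the perturbation must update $D(t)$ together with $A(t)$ (holding $D$ fixed would give the wrong answer).

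For the normalized Laplacian I would apply the Theorem to $\mathcal{L}(t)$ to get $\lambda(t)$ and a unit eigenvector $u(t)$ with $u(0)=v$, then pass to $w(t) = D(t)^{-1/2}u(t)$, which satisfies the generalized eigenequation $L(t)w(t) = \lambda(t)D(t)w(t)$ and the normalization $w(t)^TD(t)w(t) = u(t)^Tu(t) = 1$. Writing $w = w(0) = D^{-1/2}v$, differentiating the generalized eigenequation at $t=0$, and left-multiplying by $w^T$, the unknown derivative $w'(0)$ drops out because $w^T L\,w'(0) = (Lw)^T w'(0) = \lambda(Dw)^T w'(0) = \lambda w^T D\,w'(0)$, using symmetry of $L,D$ and $Lw = \lambda Dw$; what remains, after using $w^TDw = 1$, is $\tfrac{d\lambda}{dt}(0) = w^T L_x w - \lambda\,w^T\Delta w$. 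Since $w_i = v_i/\sqrt{d_i}$, this equals $\sum_{y\sim x}\paren{\tfrac{v_x}{\sqrt{d_x}} - \tfrac{v_y}{\sqrt{d_y}}}^2 - \lambda\paren{d_xw_x^2 + \sum_{y\sim x}w_y^2}$, and the elementary identity $d_xw_x^2 + \sum_{y\sim x}w_y^2 = \sum_{y\sim x}\paren{w_x-w_y}^2 + 2\sum_{y\sim x}w_xw_y$ turns this into the stated $(1-\lambda)\sum_{y\sim x}\paren{\tfrac{v_x}{\sqrt{d_x}} - \tfrac{v_y}{\sqrt{d_y}}}^2 - \lambda\sum_{y\sim x}\tfrac{2v_xv_y}{\sqrt{d_xd_y}}$.

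I expect the main obstacle to be exactly this bookkeeping in the normalized case: setting things up so that the eigenequation can be used to annihilate the $w'(0)$ term. The naive alternative of expanding $\mathcal{L}(t) = I - D(t)^{-1/2}A(t)D(t)^{-1/2}$ directly by the product rule requires both $\tfrac{d}{dt}D(t)^{-1/2}\big|_{0} = -\tfrac12 D^{-3/2}\Delta$ and the relation $D^{-1/2}AD^{-1/2}v = (1-\lambda)v$ in order to collapse two of the three resulting terms, so passing to the generalized eigenvalue problem for $w$ — where only the trivial derivatives $L'(0)=L_x$ and $D'(0)=\Delta$ appear — is the cleaner route. Everything past the cancellation is routine expansion.
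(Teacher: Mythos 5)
Your proof is correct. The combinatorial case is essentially identical to the paper's: the same one-parameter family $A(t)=A+tB$, $D(t)=\mathrm{diag}(A(t)\mathbf{1})$, $L(t)=L+tL_x$, and the same evaluation $v^TL_xv=\sum_{y\sim x}(v_x-v_y)^2$. For the normalized Laplacian you take a genuinely different route. The paper differentiates $\mathcal{L}(t)=I-D(t)^{-1/2}A(t)D(t)^{-1/2}$ directly with the non-commutative product rule, computes $\tfrac{d}{dt}D(t)^{-1/2}$, and uses the identity $D^{-1/2}AD^{-1/2}v=(1-\lambda)v$ to absorb the two resulting ``boundary'' terms into the $(1-\lambda)$ coefficient. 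You instead pass to the generalized eigenproblem $L(t)w(t)=\lambda(t)D(t)w(t)$ with $w(t)=D(t)^{-1/2}u(t)$ and $w^TDw=1$, differentiate, and kill the $w'(0)$ term by symmetry and the eigenequation, leaving $\lambda'(0)=w^TL_xw-\lambda\,w^T\Delta w$; only the constant matrices $L'(0)=L_x$ and $D'(0)=\Delta$ ever appear. The two computations land on the same expression after the same elementary rearrangement $d_xw_x^2+\sum_{y\sim x}w_y^2=\sum_{y\sim x}(w_x-w_y)^2+2\sum_{y\sim x}w_xw_y$. Your route avoids differentiating matrix inverse square roots entirely and makes the $\lambda$-dependence of the answer transparent (it enters only through $-\lambda\,w^T\Delta w$), at the small cost of having to justify smoothness of the branch $w(t)$ and the equivalence of the ordinary and generalized eigenvalue derivatives — both of which follow immediately from the Theorem since $D(t)$ is positive definite near $t=0$. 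The paper's route stays entirely within the framework of the quoted derivative formula $\lambda'(0)=v^T\mathcal{L}'(0)v$ but requires more careful bookkeeping.
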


\begin{proof}
Similarly as above, we define the adjacency matrix in the direction of $x$ to be
\[ A_{ij}(t) = \begin{cases} 1 + t & \set{i,j} \in E, x \in \set{i,j} \\
1 &\set{i,j} \in E, x \not\in \set{i,j} \\
0& \textrm{otherwise} \end{cases}.\]  The matrix $D(t)$ is then the diagonal matrix of $t$-dependent degrees, i.e. the matrix which has $A(t) \one$ on the diagonal.  In particular, $D_{xx}(t) = (1+t)d_x$, $D_{yy}(t) = d_y + t$ for $y \sim x$, and $D_{zz}(t) = d_z$ for $z \not\sim x$.  Thus we have that 
\[ L_{ij}'(t) = \paren{D'(t) - A'(t)}_{ij} = \begin{cases} d_x &  i = j = x \\ 1 & i = j, \set{i,x} \in E \\ -1 & \set{i,j} \in E, x \in \set{i,j} \\ 0 & \textrm{otherwise} \end{cases},\]
and we see that the derivative of $\lambda$ in the direction $x$ is given by
\[ v(0)^T L'(0) v(0) = d_x v_x^2  + \sum_{y \sim x} v_y^2 - 2v_xv_y = \sum_{y \sim x} \paren{v_x - v_y}^2,\] as desired.

For the normalized Laplacian the situation is slightly more complicated by the need to use the non-commutative product rule, however recalling that $\mathcal{L}(t) = I - D(t)^{-\half}A(t)D(t)^{-\half}$, we have that 
\[\frac{d}{dt}\mathcal{L}(t) = \half D(t)^{-\nicefrac{3}{2}}D'(t)A(t)D(t)^{-\half}-D(t)^{-\half}A'(t)D(t)^{-\half}+\half D(t)^{-\half}A(t)D(t)^{-\nicefrac{3}{2}}D'(t).\] 
Hence, using that $D(t)^{-\nicefrac{3}{2}}D'(t) = D'(t)D(t)^{-1}D(t)^{-\half}$ and that $D(0)^{-\half}A(0)D(0)^{-\half}v = (1-\lambda)v$,
\begin{align*}
    v^T\mathcal{L}'(0)v &= \half v^T D'(0)D^{-1}(0) (1-\lambda)v - v^TD(0)^{-\half}A'(0)D(0)^{-\half} + (1-\lambda)v^T D(0)^{-1}D'(0)v \\
    &= (1-\lambda) v^T D'(0)D^{-1}(0) v - v^TD(0)^{-\half}A'(0)D(0)^{-\half} \\
    &= (1-\lambda)\paren{v_x^2 + \sum_{y \sim x} \frac{v_y^2}{d_y}} - \sum_{y \sim x} 2\frac{v_xv_y}{\sqrt{d_xd_y}} \\
    &= (1-\lambda)\paren{v_x^2  - 2\sum_{y \sim x} \frac{v_xv_y}{\sqrt{d_xd_y}} + \sum_{y \sim x} \frac{v_y^2}{d_y}} - \lambda \sum_{y \sim x} 2\frac{v_xv_y}{\sqrt{d_xd_y}} \\
    &= (1-\lambda)\sum_{y \sim x} \paren{\frac{v_x}{\sqrt{d_x}} - \frac{v_y}{\sqrt{d_y}}}^2 - \lambda \sum_{y\sim x}\frac{2v_xv_y}{\sqrt{d_xd_y}},
\end{align*}
as desired.
\end{proof}

It is easy to see that for both the combinatorial and normalized Laplacian the derivative with respect to the eigenspace corresponding to eigenvalue 0 ($\one$ for the combinatorial Laplacian and $\left\langle \sqrt{d_i}\right\rangle_i$ for the normalized Laplacian) in the direction of any vertex is zero, as expected.   Unfortunately, it is clear that if $\lambda$ is a non-simple eigenvalue (and hence corresponds to a eigenspace of dimension at least 2), then the definition of the derivative depends on the choice of eigenvector associated with $\lambda.$  However, the following result shows that if we instead define the derivative in terms of the entire eigenspace associated with $\lambda$, then the derivative is indpendent of the particular decomposition of the eigenspace.

\begin{lemma}\label{L:subspace}
Let $S$ be a $k$-dimensional subspace of $\R^n$ and let $x, y \in \R^n$.  There is some constant $C$ such that $\sum_{i} x^Tv_i y^Tv_i = C$ for any choice of orthonormal basis of $S$.
\end{lemma}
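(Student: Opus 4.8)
The plan is to recognize the quantity $\sum_i x^Tv_i\, y^Tv_i$ as a fixed bilinear form evaluated at $(x,y)$, namely the one given by the orthogonal projection onto $S$, which is manifestly independent of the basis used. First I would rewrite, for any orthonormal basis $v_1,\dots,v_k$ of $S$,
\[ \sum_{i=1}^{k} x^Tv_i\, y^Tv_i = \sum_{i=1}^{k} (v_i^Tx)(v_i^Ty) = x^T\paren{\sum_{i=1}^{k} v_iv_i^T} y = x^T P\, y, \]
where $P := \sum_{i=1}^{k} v_iv_i^T \in \R^{n\times n}$. So it suffices to prove that the matrix $P$ does not depend on the choice of orthonormal basis of $S$; then $C := x^T P\, y$ is the desired constant.

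The cleanest route is to identify $P$ intrinsically. I would check that $P$ is exactly the orthogonal projector onto $S$, characterized without reference to any basis as the unique linear map satisfying $Pw = w$ for all $w \in S$ and $Pw = 0$ for all $w \in S^\perp$. For $w \in S$ one expands $w = \sum_j (v_j^Tw)v_j$ and computes $Pw = \sum_i v_i v_i^T \sum_j (v_j^Tw) v_j = \sum_{i,j}(v_j^Tw)(v_i^Tv_j)v_i = \sum_i (v_i^Tw)v_i = w$, using orthonormality; for $w \in S^\perp$ every $v_i^Tw = 0$, so $Pw = 0$. Since $\R^n = S \oplus S^\perp$ and this decomposition is unique, the two displayed properties pin $P$ down uniquely, regardless of which orthonormal basis was used to assemble it. Hence $C = x^T P_S\, y$ where $P_S$ denotes the orthogonal projection onto $S$, and this value is the same for every orthonormal basis.

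A more computational alternative, if a fully self-contained argument is preferred, is to compare two orthonormal bases $v_1,\dots,v_k$ and $w_1,\dots,w_k$ of $S$ directly: they are related by $w_j = \sum_i Q_{ij}v_i$ for some matrix $Q \in \R^{k\times k}$, and comparing Gram matrices (both equal to $I_k$) forces $Q^TQ = I_k$; then
\[ \sum_{j=1}^{k} w_jw_j^T = \sum_{i,i'}\paren{\sum_{j=1}^{k} Q_{ij}Q_{i'j}} v_iv_{i'}^T = \sum_{i,i'}(QQ^T)_{ii'}\, v_iv_{i'}^T = \sum_{i=1}^{k} v_iv_i^T, \]
so $P$ is unchanged. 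I do not anticipate a genuine obstacle here; the only point deserving a word of care is the standard fact — used in the alternative argument — that any two orthonormal bases of the same subspace differ by an orthogonal matrix, and the crux of the main argument is simply the observation that $\sum_i v_iv_i^T$ is the (basis-free) orthogonal projector onto $S$.
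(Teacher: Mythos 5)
Your proposal is correct and takes essentially the same approach as the paper: both reduce the claim to the basis-independence of the matrix $\sum_i v_iv_i^T$, the paper by writing a second orthonormal basis as $VP$ for an orthogonal $P\in\R^{k\times k}$ and cancelling $PP^T$, and you by the same computation (your ``computational alternative'') plus the equivalent observation that this matrix is the orthogonal projector onto $S$. If anything, your version is slightly more complete, since you explicitly justify that any two orthonormal bases of $S$ are related by an orthogonal matrix, a fact the paper's proof uses implicitly.
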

\begin{proof}
Fix an arbitrary orthonormal basis of $S$, $\set{v_1,\ldots, v_k}$ and let $P$ be an orthonormal matrix in $\R^{k\times k}$.  Define $w_i = VPe_i$ $V = [v_1 \cdots v_k]$.  Now consider
\begin{align*}
    \sum_i x^Tw_i y^Tw_i &= \sum_i x_T (VPe_i) y^T (VPe_i) \\
    &= \sum_i x^T V P e_i e_i^T P^T V^T y \\
    &= x^T V P \paren{\sum_i e_i e_i^T} P^T V^T y \\
    &= x^T V P P^T V^T y \\
    &= x^T V V^T y.
\end{align*}
As this value is independent of the choice of $P$, and in particular $P$ can be chosen to be the identity, we have the existence of the desired constant.
\end{proof}

Thus, combining the observations of Lemma \ref{L:derivative} and \ref{L:subspace}, we have the following well-founded definition.
\begin{define}[{\it Eigenspace Directional Derivative}]
Let $G = (V,E)$ be a graph and let $\lambda$ be an eigenvalue of the combinatorial Laplacian with $\set{v^{(1)},\ldots, v^{(k)}}$ an orthonormal basis of the associated eigenspace.  The derivative of $\lambda$ in the direction $x \in V$ is given by 
\[ \frac{1}{k}\sum_{i=1}^k \sum_{y \sim x} \paren{v_x^{(i)} - v_y^{(i)}}^2.\]  
If instead $\lambda$ and $\set{v^{(1)}, \ldots, v^{(k)}}$ parameterize an eigenspace of the normalized Laplacian, the derivative of $\lambda$ in the direction of $x \in V$ is
\[\frac{1}{k}\sum_{i=1}^k \paren{ (1-\lambda)\sum_{y \sim x} \paren{\frac{v^{(i)}_x}{\sqrt{d_x}} - \frac{v^{(i)}_y}{\sqrt{d_y}}}^2 - \lambda \sum_{y \sim x} \frac{2v^{(i)}_xv^{(i)}_y}{\sqrt{d_xd_y}}}.\]
\end{define}

We note that it is a straightforward exercise to generalize this definition to the case of weighted graphs or to the derivative in the direction of any subset of edges of the graph.  

\subsection{Directional Laplacian Centrality}

We also note that, while the formalism and precise definition may be new, the essential idea of the eigenspace directional derivative is solidly  rooted in prior work.  For example, the aforementioned work of Qi, et al.~\cite{qi2012laplacian},
where they define Laplacian Centrality as $\frac{\mathcal{E}(G)-\mathcal{E}(G\setminus v)}{\mathcal{E}(G)}$, where $\mathcal{E}(G)$ is the sum of squares of eigenvalues of the combinatorial Laplacian, can be understood as an attempt to capture the idea of a directional derivative.  
Specifically, letting $A$ be the adjacency matrix of $G$ and letting $B_x$ be the adjacency matrix of the edges incident to $x$, define $A(t) = A + tB_x$.  
Similarly as above, $D(t) = A(t) \one$ and $L(t) = D(t) - A(t)$.  
With this notation it is easy to see that the Laplacian Centrality of $x$ can be viewed as a re-scaling of the approximation
\[ \left.\frac{d}{dt}\trace(L^2(t))\right|_{t=0} \approx \frac{\trace(L^2(0)) - \trace(L^2(-1))}{0- (-1)}. \]
Now, given a complete orthonormal decomposition  $\set{(\lambda_i,v^{(i)})}$ for the combinatorial Laplacian, we can evaluate this derivative exactly as
\begin{align*}
\left.\frac{d}{dt}\trace(L^2(t))\right|_{t=0} &= \left.\frac{d}{dt}\paren{ \sum_i \lambda_i(t)^2}\right|_{t=0} \\
&= \sum_i 2 \lambda_i(0) \lambda'_i(0) \\
&= \sum_i 2 \lambda_i \sum_{y \sim x} \paren{v^{(i)}_x - v^{(i)}_y}^2.
\end{align*}
It is interesting to note that, by standard rearrangements of the quadratic form, $\lambda_i = \sum_{\set{x,y} \in E} \paren{v_x^{(i)} - v_y^{(i)}}^2$ and thus $\sum_{y \sim x} \paren{v_x^{(i)} - v_y^{(i)}}^2$ can be thought of as the portion of $\lambda_i$ "incident" with $x$.  

These ideas can be further expanded to understand the relative importance of any combinatorial substructure to functions of the spectrum.  
  Namely, if $f$ is a function of $\bm{\lambda} = (\lambda_1,\ldots, \lambda_k)$, the eigenspaces of a matrix associated with a graph $G = (V,E)$, and $E'$ is the set of edges associated with a combinatorial substructure of $G$, then relative importance of $E'$ to $f$ can be described by 
\[ \frac{d}{dE'}f(\bm{\lambda}) = \nabla f(\bm{\lambda})^T \frac{d\bm{\lambda}}{dE'}.\]
Returning to the Laplacian Centrality definition of Qi, et al.\ ~\cite{qi2012laplacian}, it easy to see that it can be phrased within this framework by defining $f$ as the sum over all eigenspaces $\set{(\lambda_i,V_i)}$ of $\dim(V_i)\lambda_i^2$. 

Having established this context, we introduce our proposed importance measure Directional Laplacian Centrality.

\begin{define}[{\it $S$-Directional (Normalized) Laplacian Centrality}] 
Let $G = (V,E)$ be a graph on $n$ vertices with combinatorial and normalized Laplacians $L$ and $\mathcal{L}$, respectively. Let $\set{(\lambda_i,v^{(i)})}$ be an orthonormal decomposition of the eigenstructure of $L$ (resp. $\mathcal{L}$) such that $\lambda_i \leq \lambda_{i+1}$.  For any set $S \subseteq \set{1,\ldots,n}$, the $S$-Directional Laplacian Centrality (resp. $S$-Directional Normalized Laplacian Centrality) of a vertex $x \in V$ is
\begin{align*}
    S\mbox{-}\DLC(x)&=\sum_{s \in S} \sum_{y \sim x} \paren{v_x^{(s)} - v_y^{(s)}}^2, \\
    S\mbox{-}\nDLC(x) &=
\sum_{s \in S} \paren{\paren{1-\lambda_s} \sum_{y \sim x} \paren{\frac{v_x^{(s)}}{\sqrt{d_x}} - \frac{v_y^{(s)}}{\sqrt{d_y}}}^2 - \lambda_s \sum_{y \sim x} \frac{2v_x^{(s)}v_y^{(s)}}{\sqrt{d_xd_y}}},
\end{align*}
respectively. 
\end{define}

As oftentimes we are interested in the extremal eigenvalues\footnote{This preference is motivated by theoretical observations such as the Wigner Semi-circle Law~\cite{Wigner:CollectionsI,Wigner:CollectionsII,Wigner:SemiCircle55,Wigner:SemiCircle58} and the related extensions~\cite{Tao:CircularLaw}, as well as the  spectral analysis of the random $d$-regular graph~\cite{McKay:EigenvalueDregular,Vu:EignavluesDRegular} and the Chung-Lu random graph~\cite{CLV:PowerLawEigenvalues,Chung:ExpectedDegreeSpectra,CLV:spectra} which show that behavior of the "bulk" of the spectrum of many random matrices (including those associated with random graphs) has a limiting behavior.  This indicates that if a we can think of the process that generates a graph in a "low-rank" manner, then the "bulk" of the spectrum is is attributable to stochasticity rather than intrinsic properties of the generative process.}, we use the notation $k$-$\DLC$ and $\overline{k}$-$\DLC$ to denote the sets $\set{t+1,\ldots,t+k}$ and $\set{n-k+1,\ldots, n}$ where $t$ is the dimension of the nontrivial null space of $L$.
In this way, eigenvectors in the null space are excluded because the directional derivative for such eigenvectors is always trivially zero.
As an example, we compute the Directional Laplacian Centrality and Normalized Laplacian Centrality for the first 5 and last 5 nontrivial eigenvalues of the same graph considered in Figure \ref{fig:centEx}. 
We present a visualization of the centrality values in Figure \ref{fig:LC_way}, alongside Qi's Laplacian Centrality, and Normalized Laplacian Centrality.\footnote{Although only defined for the combinatorial Laplacian by Qi in \cite{qi2012laplacian}, the Normalized Laplacian Centrality can be defined in the same way, as the percent change in the normalized Laplacian energy. }
\begin{figure}[h]
    \centering
    \subfloat[Directional Laplacian Centrality\label{fig:DLC_4way}]{\includegraphics[scale=0.28]{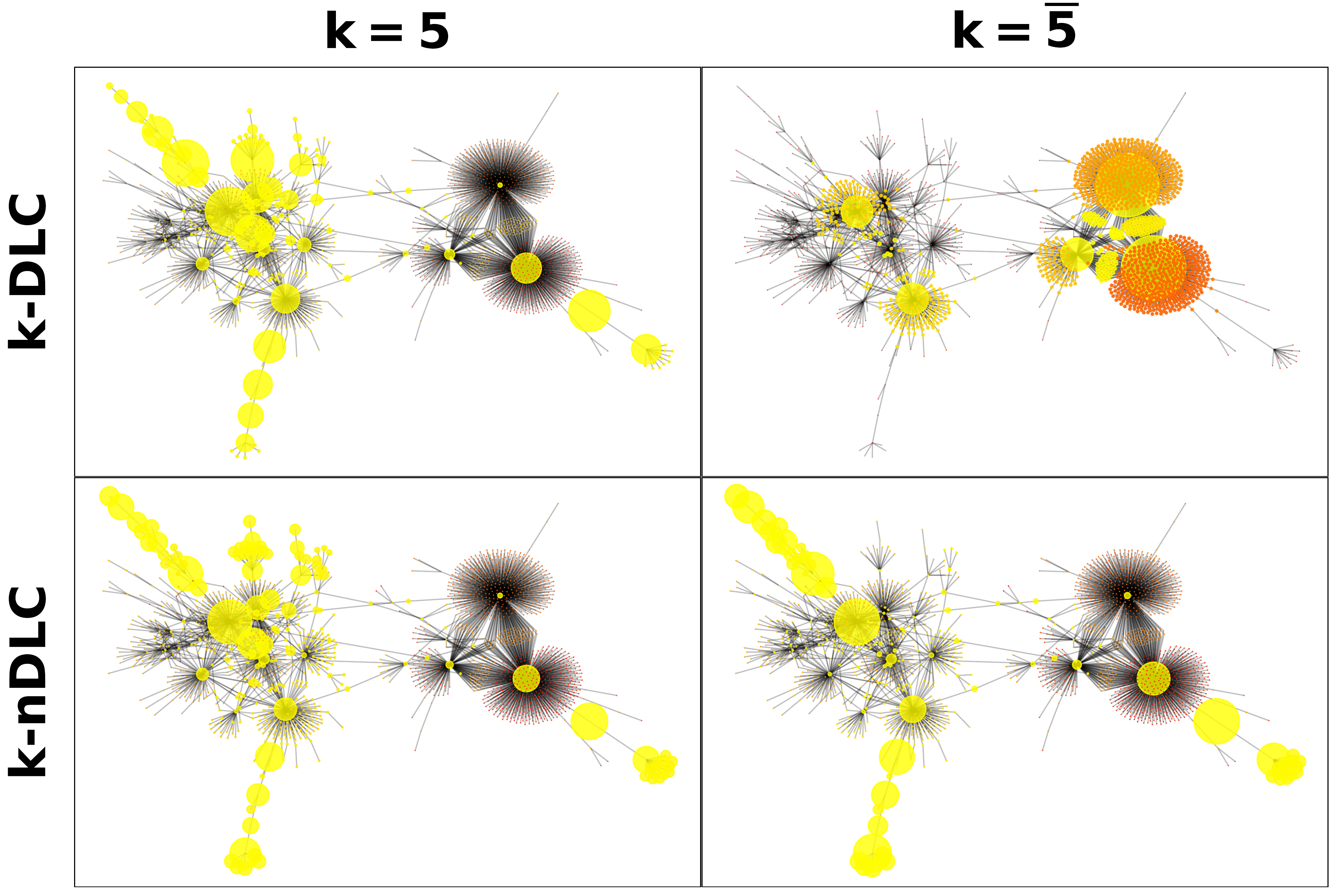}}
    \subfloat[Laplacian Centrality\label{fig:LC_2way}]{
    \includegraphics[scale=0.28]{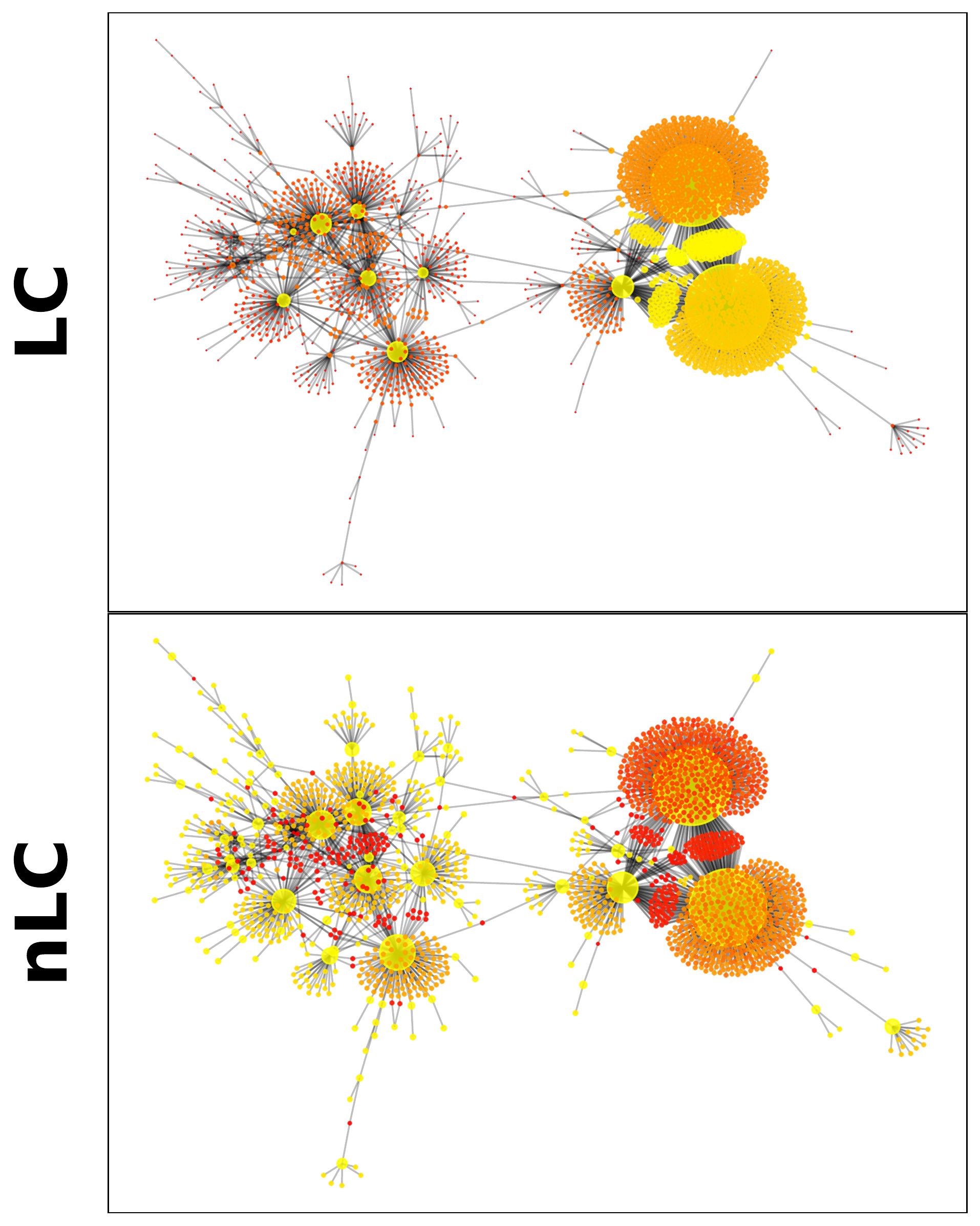}
    }
    \caption{(a): Directional Laplacian (top row) and Normalized Laplacian Centrality (bottom row) on the same graph, for the 5 largest (right side) and 5 smallest (left side) nontrivial eigenvalues. (b): Laplacian Centrality (LC) and Normalized Laplacian Centrality (nLC). Vertex size is proportional to score, and color is proportional to score percentile, with yellow for higher percentiles and red for lower percentiles.}
    \label{fig:LC_way}
\end{figure}

Comparing the visualizations within this figure, the differences underscore the impact the choice of eigenvalues and matrix may have on the scores. 
These differences in centrality score align with our understanding that the maximal and minimal eigenvalues of a graph capture very different properties. 
Comparing Figure \ref{fig:DLC_4way} to the previously depicted PageRank, Katz, Betweeness and Closeness centrality visualizations in Figure \ref{fig:centEx}, there are clear qualitative differences. 
Like those measures, DLC may ascribe importance to ``hub" vertices with high degree, particularly as seen in the $\overline{5}$-$\DLC$. 
But in certain cases, such as the $5$-$\nDLC$, low-degree vertices on long radial paths leading to hub vertices are also highlighted. 
This may be helpful when looking for lateral movement anomalies in a network as long paths can indicate such behavior.
In contrast, the distance and walk-based measures give minimal scores to these vertices, as do Qi's Laplacian Centrality and normalized Laplacian Centrality scores visualized in Figure \ref{fig:LC_2way}. 
The ability to measure structural importance beyond hubs is particularly valuable and intriguing, as this suggests DLC may reflect structural nuances otherwise lost in popular measures. 
In the next section, we conduct experiments aimed at illustrating the behavior of DLC and nDLC on a vareity of standard network science data sets.

\section{Applications of Directional Laplacian Centrality}\label{sec:applications}
Before addressing the applicability of Directional Laplacian Centrality to cyber-situational awareness in Section \ref{sec:exp}, we develop an intuition for the behavior of the $\DLC$ and $\nDLC$ by applying it to several well-understood graph data sets.  Below, we consider the Karate Club, Les Mis\'{e}rables, and Network Science collaboration data sets.  Then in Section \ref{SS:enron} we apply $\DLC$ and $\nDLC$ to the dynamic graph from the Enron email data set.

\subsection{Social Interaction Data}\label{SS:static}
\begin{figure}
    \centering
    \includegraphics[scale=0.26]{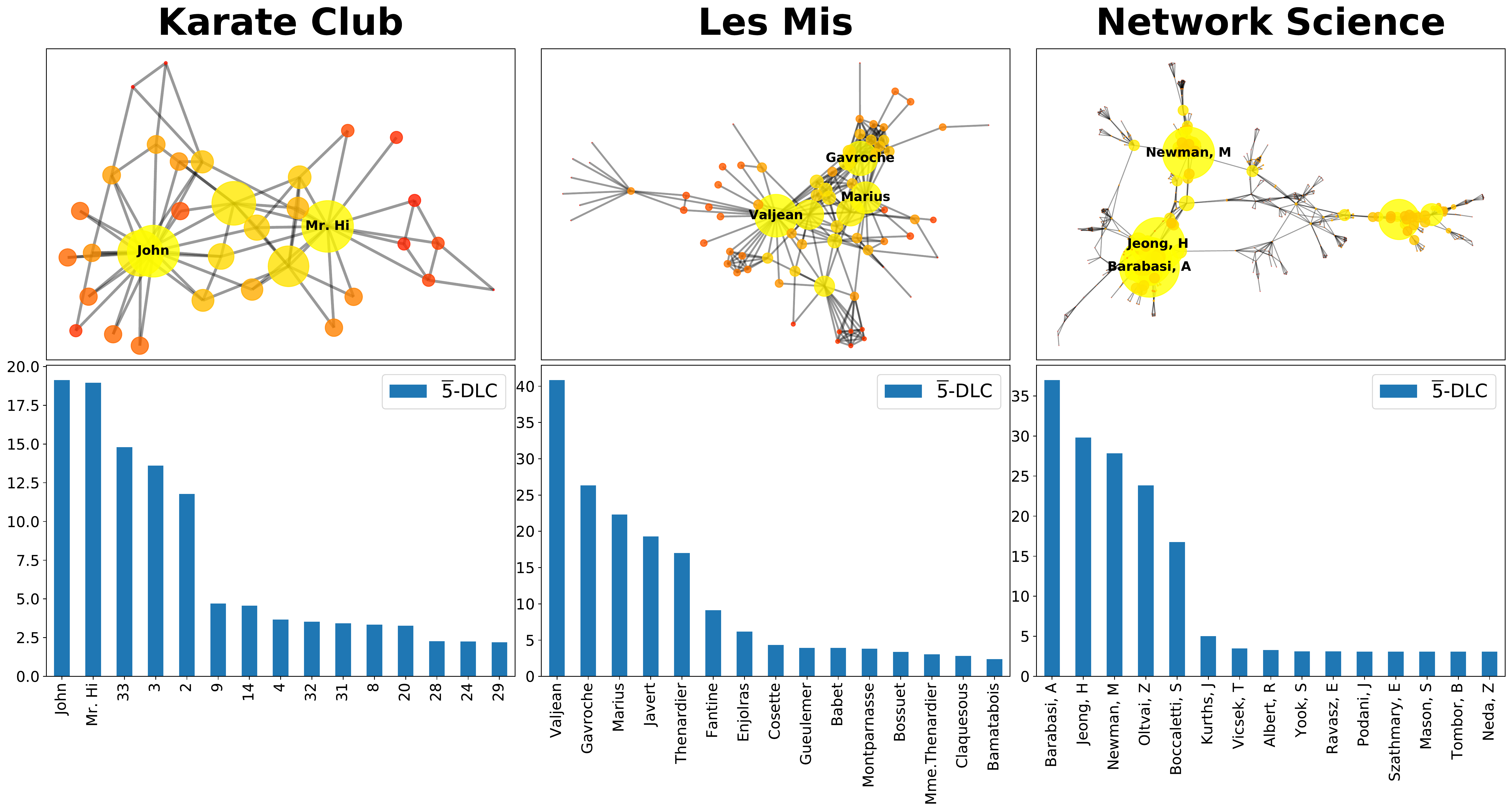}
    \caption{The Directional Laplacian Centrality scores for the 5 largest eigenvalues of the Karate Club, Les Mis and Network Science datasets. In the visualisation (top), vertex size is proportional to score, and color is proportional to score percentile, with yellow for higher percentiles and red for lower percentiles. Beneath each visualization, the plots present the top 15 scores.}
    \label{fig:socialNets}
\end{figure}
We consider several prominent data sets frequently utilized to test graph centrality methods. Each of these datasets features one or more vertices that, in their respective context, can be naturally interpreted as the most important or central to that graph. Figure \ref{fig:socialNets} presents the $\overline{5}$-DLC results for these data sets, which we describe below:
\begin{itemize}
    \item {\bf Zachary's Karate Club}. A graph of social interactions amongst members of a karate club \cite{zachary1977information}. Due to a conflict between an instructor Mr. Hi, and an administrator, John, the club fractured into two groups. The two highest scoring $\overline{5}$-DLC values are achieved by the leaders of these groups, Mr. Hi and John. 
    \item {\bf Les Mis\'{e}rables}. A co-occurrence graph from \cite{knuth1993stanford} derived from Victor Hugo's novel. Vertices represent characters which are linked if they co-occur in a scene. The $\overline{5}$-DLC assigns the highest centrality value to the protagonist, Jean Valjean, while other major characters such as Javert and Gavroche are similarly ranked highly.
    \item {\bf Network Science}. A collaboration graph from 2006 \cite{newman2006finding} containing 379 researchers in network science. Two authors are linked if they co-authored a paper together. As noted in \cite{newman2006finding}, the top 5 authors with the highest $\overline{5}$-DLC scores are group leaders or senior researchers of groups working in network science. 
\end{itemize}
These observations affirm that DLC may serve as a reasonable centrality measure identifying known important entities in graphs derived from different contexts. While encouraging, such results are not surprising; indeed other centrality measures, such as closeness, betweenness, PageRank, and Katz, would also likely rank some of the same vertices highly. 
 
In order to more comprehensively compare our method to existing centrality methods, next we compare not only the top scoring vertices, but how the entire rankings compare to those given by these other well-known methods. 
Furthermore, we apply our measures under a variety of different parameter settings, considering both DLC and nDLC, and using the top $k$ smallest or largest eigenvalues, for all $k$. 
For each such case, we quantify the similarity in ranking using a well-known nonparametric statistical measure, Spearman's rank correlation coefficient. 
Spearman's $\rho$ ranges from -1 (if the ordinal rankings are reverses of each other) to 1 (if the ordinal rankings are identical). 
Figure \ref{fig:spearman} presents the results for the Network Science collaboration graph, the largest of the 3 data sets. 

Examining the correlation coefficients for closeness and betweenness centrality, neither DLC nor nDLC exhibit strong rank correlations with these measures, regardless of whether the top or bottom $k$ eigenvalues are utilized. On the other hand, PageRank and Katz centrality show very strong correlations with $\overline{k}$-DLC, which tend to increase as more of the top eigenvalues are considered. In contrast, the $\overline{k}$-nDLC rankings consistently show moderate anti-correlations across all four centrality measures.  Across all 4 centrality measures, we observe a rapid convergence between the $k$ and $\bar{k}$ versions of $\DLC$ and $\nDLC$ as $k$ approaches the number of vertices.  While the agreement of these measures is inevitable (as the set of eigenvalues are eventually the same) it is notable this convergence happens over an abbreviated range of choices for $k$ (especially for $\nDLC$).\footnote{Although it is clear that when the entire spectrum is being consider the exact same ordering will be recovered, we note that equal values of the Spearman's $\rho$ (such as between $k$-$\DLC$ and $\overline{k}$-$\DLC$ for PageRank) does not imply that the orderings induced by those measures are the same.}  We suspect this is a general phenomenon  reflective of the tight correlation of small eigenvalues of $L$ and $\mathcal{L}$ with the structure of $G$.  However, as we will see in the next section, there is still significant information that can be recovered by considering both $k$ and $\overline{k}$ variants. 
Overall, the range of differences observed here underscores the flexibility afforded by the parameter settings and choice of matrix in applying our measures, and confirms DLC and nDLC rankings may be distinct from those given by other prominent centrality measures.

\begin{figure}
    \centering
    \includegraphics[scale=0.4]{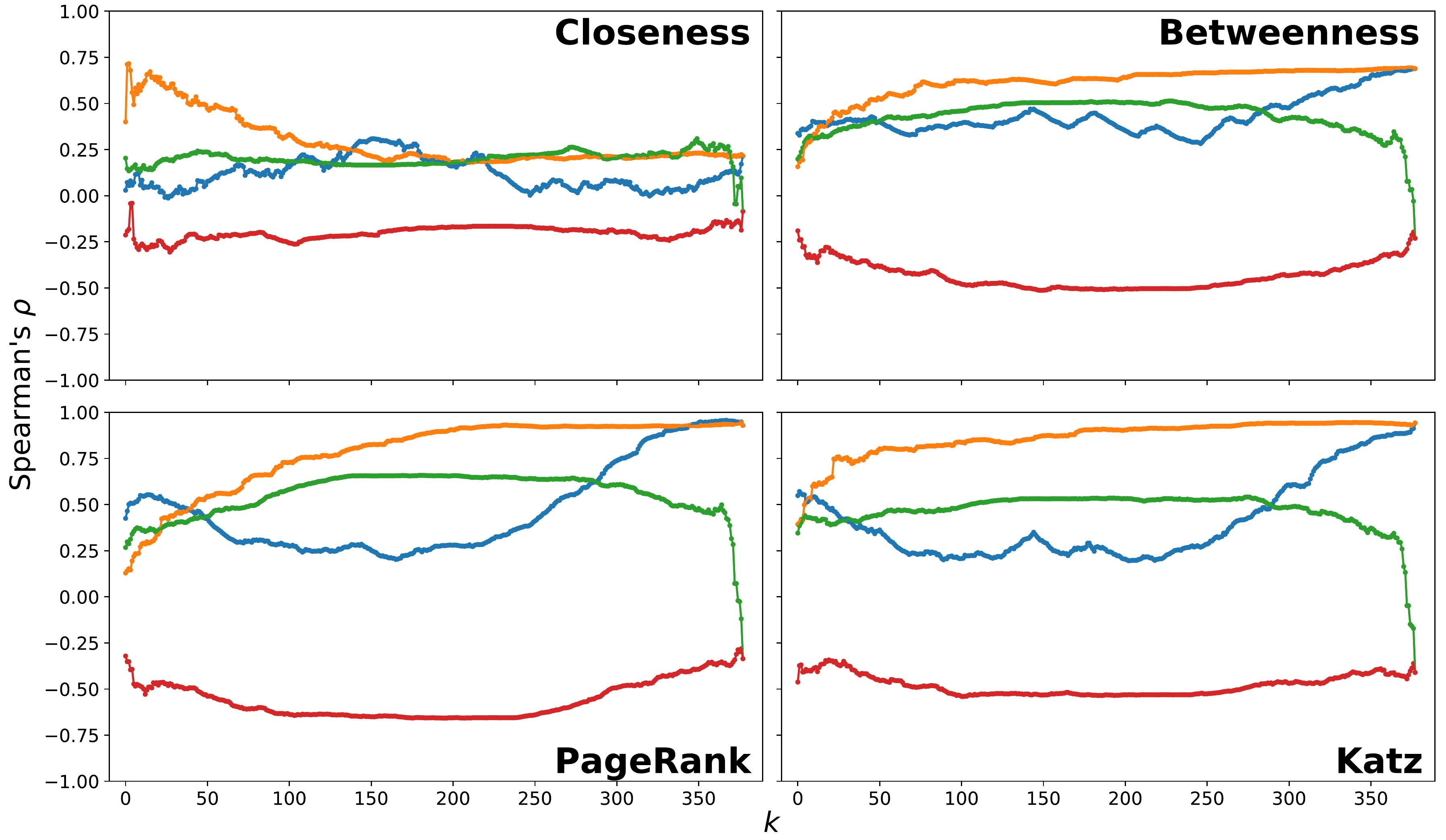} \\
    \hspace{10mm}\includegraphics[scale=0.4]{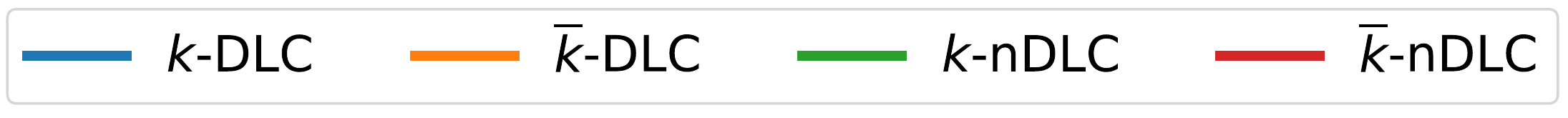}
    \caption{Correlations between rankings of authors in the Network Science data set derived from our Directional Laplacian and Normalized Laplacian centrality under different parameter settings, compared with rankings derived from four other well-known centrality measures. }
    \label{fig:spearman}
\end{figure}

\subsection{Enron E-mail Data}\label{SS:enron}
\afterpage{
\begin{landscape}
\begin{figure}
\centering
\includegraphics[trim = 90 130 5 85, clip, width = \linewidth]{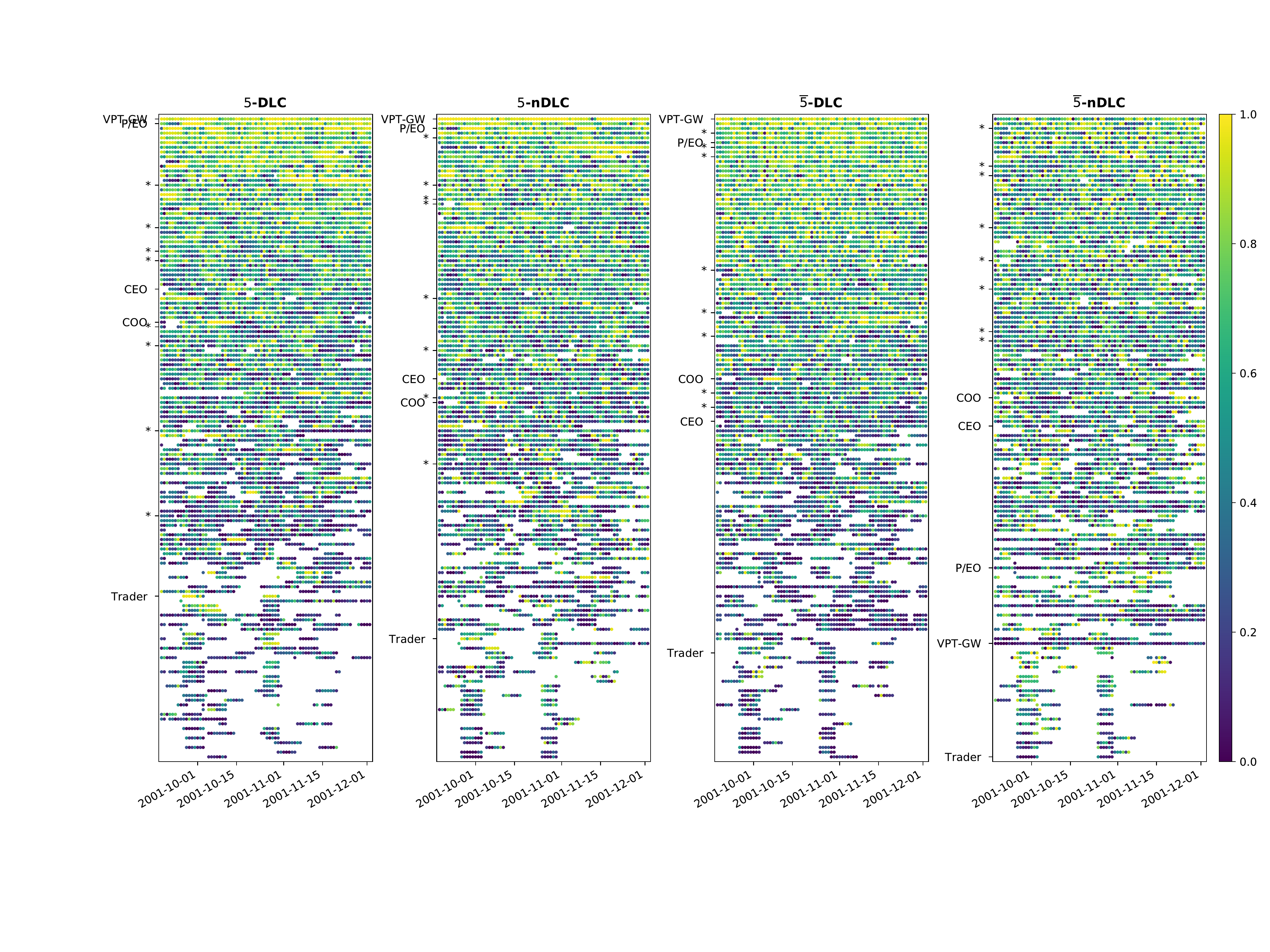}
\caption{Percentiles of $5$-$\DLC$, $5$-$\nDLC$, $\overline{5}$-$\DLC$, and $\overline{5}$-$\nDLC$ of the Enron email day from September $12^{\textrm{th}}$ to December $2^{\textrm{th}}$ (the day before significant layoffs).  Each email graph is formed from the emails for the previous seven days.  Each plot is sorted in order of increasing sum the percentiles over all graphs.  Important/interesting employees in the data set are identified by job title, Chief Executive Officer (CEO), Chief Operating Officer (COO), President and Executive Officer (P/EO), Vice President of Trading -- Gas West (VPT-GW).  The employees labelled with ``*'' are associated with legal oversight of Enron's activities (i.e., those with job titles such as General Council, Employment Lawyer, Legal Specialist, etc.).  The employee identified as Trader is one of several traders in the data set.} \label{F:enron}
\end{figure}
\end{landscape}}
The Enron email data set is a well studied data set~\cite{Priebe2005,Chapanond2005,Leskovec2010,Diesner2005,Bader2008,KlimtYang2004,Shetty:Enron} resulting from the public release of emails of approximately 150 Enron employees gathered by a Federal Energy Regulation Commission (FERC) investigation into the manipulation of energy market prices in the western region (in particular, California) by Enron.  
Subsequently, these emails were deemed a public record and were made available to the public in various cleaned formats\footnote{The analysis in this work is based on the version available at \url{https://s3.amazonaws.com/enron-shetty-adibi/enron-mysqldump.tar.gz}}. 
We restrict our attention to the approximately 13,000 emails in the data set with a time stamp between September $12^{\textrm{th}}$, 2001 and December $2^{\textrm{nd}}$, 2001 (the day before Enron laid-off approximately 4,000 employees). 
For each 7 day window within this time frame, we build the contact graph where two employees are connected by an edge if there is any email communication between them in that window.  
From the graph associated with the window, we calculate the percentile rank according to $5\mbox{-}\DLC$, $5\mbox{-}\nDLC$, $\overline{5}\mbox{-}\DLC$, and $\overline{5}\mbox{-}\nDLC$ for each employee in the giant component.  
The results of these calculations (sorted by the sum of all percentile scores over the entire time frame) are displayed in Figure \ref{F:enron}.  
While an in-depth investigation of these results is beyond the scope of this work, we wish to point out three high level observations resulting from this analysis.

First, we note that for $5$-$\DLC$, $5$-$\nDLC$, and $\overline{5}$-$\DLC$, the most important individual over the period of interest is the Vice President of Trading in charge of the market segment corresponding to gas in the western region (VPT-GW).  
As FERC's investigation was focused on the manipulation of energy prices in California it is unsurprising that the VPT-GW would be a key person of interest and this is reflected in the structure of the email communications gathered by FERC.  
It is worth noting that we are aware of no other method applied to the Enron data set that identifies VPT-GW as the most important individual.

The second observation is that the employees associated with legal issues (denoted by a ``*'' in Figure \ref{F:enron} and including roles such as general council, employment lawyer, legal specialist, etc.) have visually higher overall importance in $\overline{5}$-$\DLC$ and $\overline{5}$-$\nDLC$ than in $5$-$\DLC$ and $5$-$\nDLC$ with an average ranking of the summed percentile importance increasing from $42$ to $32$ for $\DLC$ and from  $34$ to $25$ for $\nDLC$.  
While the precise reason for this shift is not clear, it seems likely that the nature of legal advising within Enron results in a structurally different communication pattern than that associated with trading, to which the bulk of the other employees can be affiliated.

Finally, we note the interesting patterns associated with a single trader in $5$-$\DLC$ (denoted ``Trader'' in Figure~\ref{F:enron}).
While there are several other employees who have a similar pattern of only occasionally communicating with other employees in the data set, this trader is unique in the level of importance during those time periods.  
In particular, this trader's average $5$-$\DLC$ percentile is over $86.1\%$ for the week-long periods when they are present in the giant component.  
While the reason for this behavior is unclear, this trader appears in evidence brought by Snohomish County Public Utility District (SPUD) in their lawsuit against Enron.  
According to contemporaneous reporting on the lawsuit, in the transcripts provided by SPUD this trader arranged for the shut-down of power plants in order to manipulate the California energy market through widespread blackouts/brownouts.

\section{Cyber-Situational Awareness Experimental Setup and Results} \label{sec:exp}

To evaluate the effectiveness of our proposed DLC and nDLC, we analyze their sensitivity to planted anomalies in a graph of network flow. 
Test data sets with real network data and ground truth labeled anomalies can be hard to come by.
There are many causes for this difficulty, for instance, network owners may not want to share technical data regarding their network for fear
that this information will allow adversaries to identify critical resources in a network.  
Additionally, even if network owners are willing to release technical information regarding their network, in many cases even labelling the ground truth is difficult.
However, recently Los Alamos National Laboratory (LANL) has shared three large de-identified data sets\footnote{\url{https://csr.lanl.gov/data/}} from their internal network capture~\cite{Kent:LANL_AuthenticationData,Kent:LANL_RedTeamData,Kent:LANL_RedTeam,LANL:HostData}.
One set in particular, {\it Comprehensive, Multi-Source Cyber-Security Events} (CMCE), contains authentication logs, process logs, network flow, domain name server lookups, and labeled red team authentication events over the course of 58 consecutive days.
Figure \ref{F:density} depicts the number of flow records in trailing 1-minute windows from the LANL CMCE dataset~\cite{Kent:LANL_RedTeamData} as well as timing of the successful red team authentications.
\begin{figure}[b]
    \centering
  \includegraphics[trim = 80 0 100 35, clip, width = .9\linewidth]{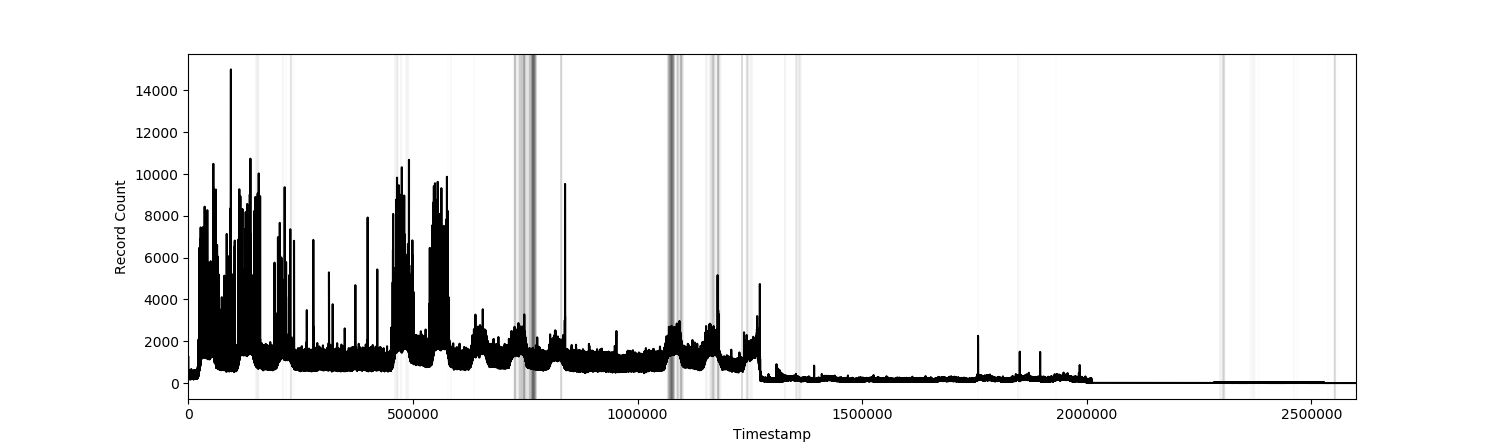}
  \caption{Flow record density of trailing 1 minute window for the LANL data set.  Successful red-team authentications are denoted by vertical gray lines.}\label{F:density}
\end{figure}
It is worth mentioning that the red team events do not represent all of the actions of the red team, but rather contain only the successful authentications from the red team.
This data set will be the source of our baseline graph for two of three experiments.
While we recognize that no two networks are alike, we note that in our experience with diverse operational data sets there are many similarities within network traffic data across networks. 
The resulting graphs tend to have roughly power law degree distribution with few hubs (high degree nodes) and many leaves or short paths.
The ``dandelion'', a high degree node connected to many degree 1 nodes, is a common motif found in these graphs.
We believe that the LANL data is a good surrogate for a generic network as we find these structures to be prevalent in this data set. 
But we also recognize the need for future work to consider data from additional networks to feed our baseline.

In order to have a baseline graph that is not likely to be influenced by red team activities we chose an arbitrary 1-minute time window (timestamps $1065740$ to $1065800$) from the CMCE network flow that corresponds to a typically sized graph and is well-separated from red team events (nearly 3 days after the last successful red team authentications and 10 minutes before the next red team authentication). 
To avoid minor technical issues with the spectra of disconnected graphs the experiments will only use the giant component of this flow graph, which contains 94\% of the vertices. 
This largest component has 2,005 vertices, 2,450 edges, and has a maximum degree of 605.  
Figure \ref{fig:LANL_1min} shows our chosen graph, which was also used in Figures \ref{fig:centEx} and \ref{fig:LC_way} to illustrate differences in centrality scores.
The degree distribution of this graph is roughly power-law with exponent approximately equal to $2.18$, see Figure \ref{F:degreedist}. \begin{figure}[b]
\centering
  \subfloat[Graph visualization\label{fig:LANL_1min}]{\includegraphics[width = .4\linewidth]{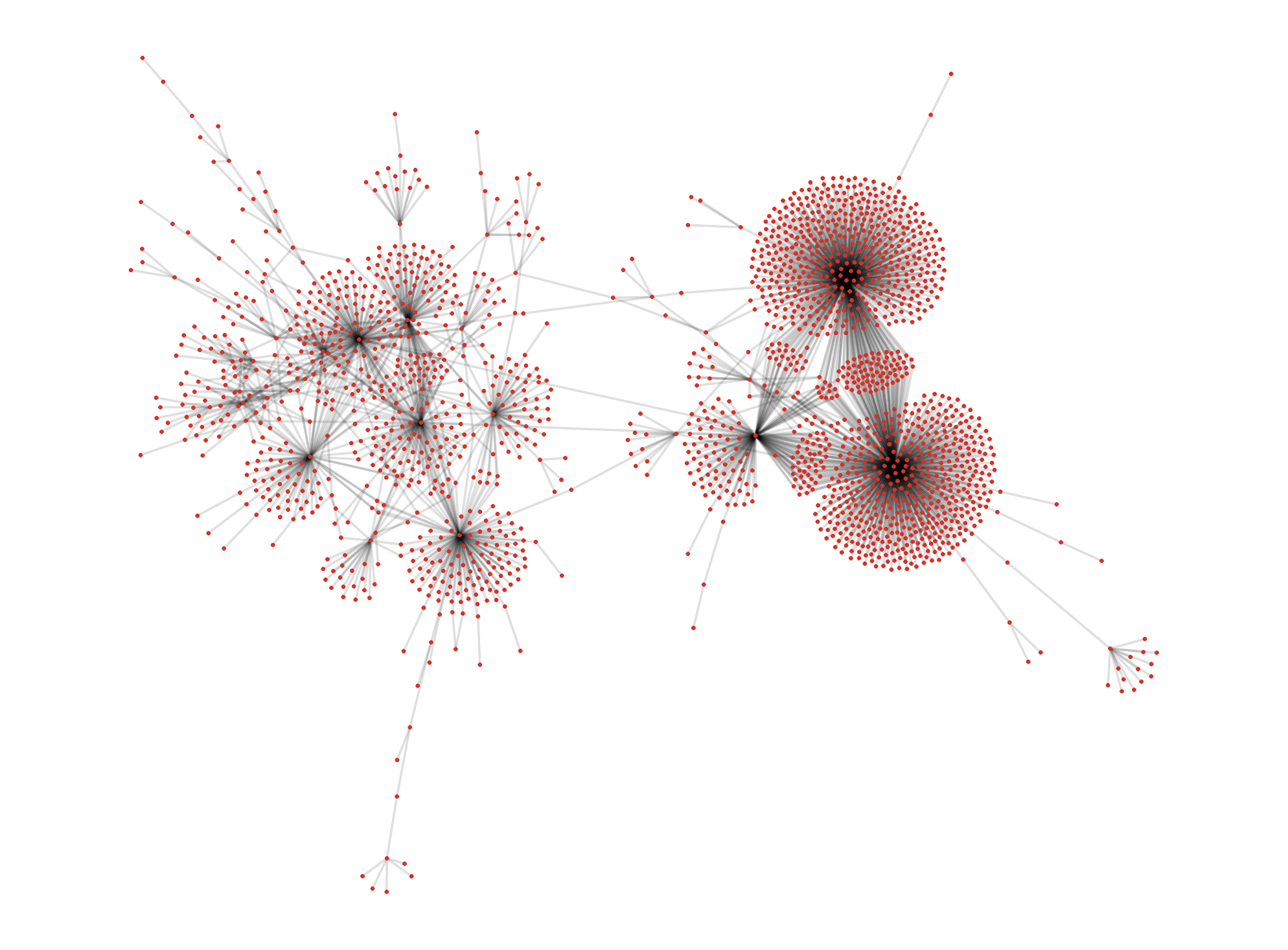}}
  \qquad \qquad
  \subfloat[Degree distribution \label{F:degreedist}]{  \includegraphics[trim = 15 0 40 35,clip, width = .4\linewidth]{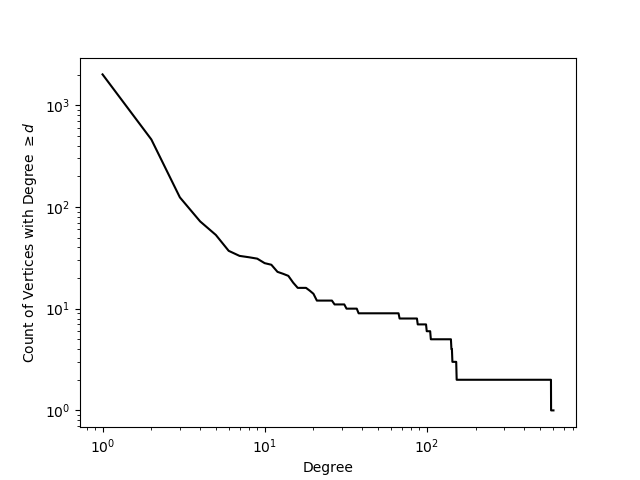}}\\
\caption{Graph visualization of network flow over 1 minute from CMCE data (left) and the degree distribution of that graph (right)}
\end{figure}
Into this specific snapshot from the LANL data we will inject two types of anomalies, represented as extremal subgraphs depicted in Figure \ref{fig:star_clique}, the star and clique.
The specific method of injection varies by experiment and so those details are provided in Sections \ref{subsect:ordered}, \ref{subsect:random}, and \ref{subsect:time}.
These two subgraphs represent building blocks of adversarial behavior: network reconnaissance or data gathering (clique) and lateral movement (star). 
By studying the injection of these two patterns separately we can better understand how their presence will perturb importance values.
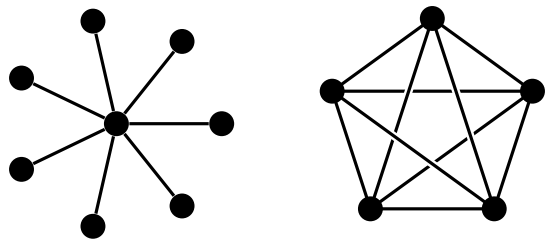
\begin{figure}[b]
\centering
\begin{tikzpicture}[scale=0.7]
\begin{scope}[xshift=6cm]
    \foreach \a in { 18, 90, 162, 234, 306 } {
      \foreach \b in { 18, 90, 162, 234, 306 } {
        \drawPolarLinewithBG{\a:2}{\b:2};
      }
    }
    \foreach \a in {18,90, 162, 234, 306 } {
      \node at (\a:2cm) [circle,fill=black] {};
    }
    \node at (0,-2.5) {};
  \end{scope}
     \stargraph{7}{2}
\end{tikzpicture}
\caption{A star graph (left) and a clique (right).}\label{fig:star_clique}
\end{figure}

{\it Star anomaly: }
In the context of network flow, the injection of a star represents the scenario in which a particular IP address (represented by the central, or {\it root}, vertex $r$) exchanges data with some additional vertices (the leaves) while the rest of the graph remains exactly the same.  
Such a scenario might be malicious or benign, depending on the number of additional IPs contacted and the nature of the data exchange. 
Regardless, this structural change should be reflected as an increase in the importance of $r$ since the increase in number of IPs with which $v$ exchanges data is the sole change in the graph.

{\it Clique anomaly: } The injection of a clique represents the scenario in which a particular set of IP addresses all exchange data with each other while the rest of the graph remains exactly the same. 
Such a scenario could be benign, representing a broadcasting operation in a distributed computing scenario, or could represent malicious behaviour such as a prelude to data exfiltration or command-and-control. 
In either case, the importance of the involved vertices should increase as the inter-set communication represents the sole change in the graph.

Using these anomalies we conduct three injection experiments, described in the next three subsections, with increasing levels of complexity and stringency. 
The first two experiments will be performed on our chosen snapshot of LANL network flow while the third uses a synthetic time-evolving graph.
In Section \ref{subsect:ordered} we consider injection of both patterns among low importance vertices in the baseline graph mentioned above, and in Section \ref{subsect:random} we randomly inject the star anomaly. This simulates what could happen if the only temporal change in a graph is the addition of one of these subgraphs. In Section \ref{subsect:time} we inject the star anomaly within normal temporal evolution.
It is worth mentioning that conceptually simpler measures (such as degree change or local density) may also be able to detect the root of the star-anomaly or the vertices of clique-anomaly when compared to a sufficiently stable background.  However, in the dynamic environment, such as that discussed in Section \ref{subsect:time}, these measures will have significant difficulty in distinguishing the injection of anomalous behavior amid the natural variation of the underlying graph.

\subsection{Anomaly Injection Among Low-Importance Vertices} \label{subsect:ordered}

In this first experiment we deterministically choose vertices with low importance scores to be involved in the anomaly. 
This simple experiment tests whether, all else equal, the change to lower-importance vertices is registered by the centrality metrics. 
The procedure for injecting both anomalies into $G$ starts by ordering the vertices of $G$ according to importance in $G$, lowest to highest. Then, to inject a star anomaly of size $s$ into $G$:
\renewcommand{\theenumi}{\roman{enumi}}
\begin{enumerate}
\item Select a {\it root} vertex $r$ to be one of the vertices of minimum importance.
\item Select the next smallest $s-1$ vertices in the list of importance to be {\it leaves}.
\item Add, if necessary, an edge between the root vertex and any leaf.
\end{enumerate}
Similarly, the procedure for injecting a clique anomaly of size $s$ into $G$ is:
\begin{enumerate}
\item Select a set $S$ of $s$ smallest importance vertices.
\item For every pair of vertices, $u,v \in S$, add, if necessary, an edge between $u$ and $v$.
\end{enumerate}

Figure \ref{F:injection} displays the percentile of the importance values of the vertices involved in the injection as more vertices are added to the anomaly (indicated on the $x$ axis).
For example, in Figure \ref{F:SIL} at $x=100$ the corresponding $y$ value for the solid line is the percentile of the $\overline{5}$-DLC for the root vertex after 100 leaf vertices have been attached to it.
The $y$ value for the dashed line is the percentile of the average $\overline{5}$-DLC over all of the 100 leaf vertices.
For both the leaf vertices and the vertices involved in the clique, the centrality scores are averaged before evaluating the centrality percentile.
We use the percentile rather than the raw centrality score in order to put centralities of all graphs on the same scale. 
Each $x$ value corresponds to a different graph (the baseline graph with a differently-sized anomaly injected) and thus the importance values may not be comparable across all $x$ values. For this reason we will continue to use the percentile in the remaining two experiments.
  
  \begin{figure}[h]
  \centering
  \subfloat[Star Injection for $\overline{5}$-DLC\label{F:SIL}]{\includegraphics[width = .5\linewidth]{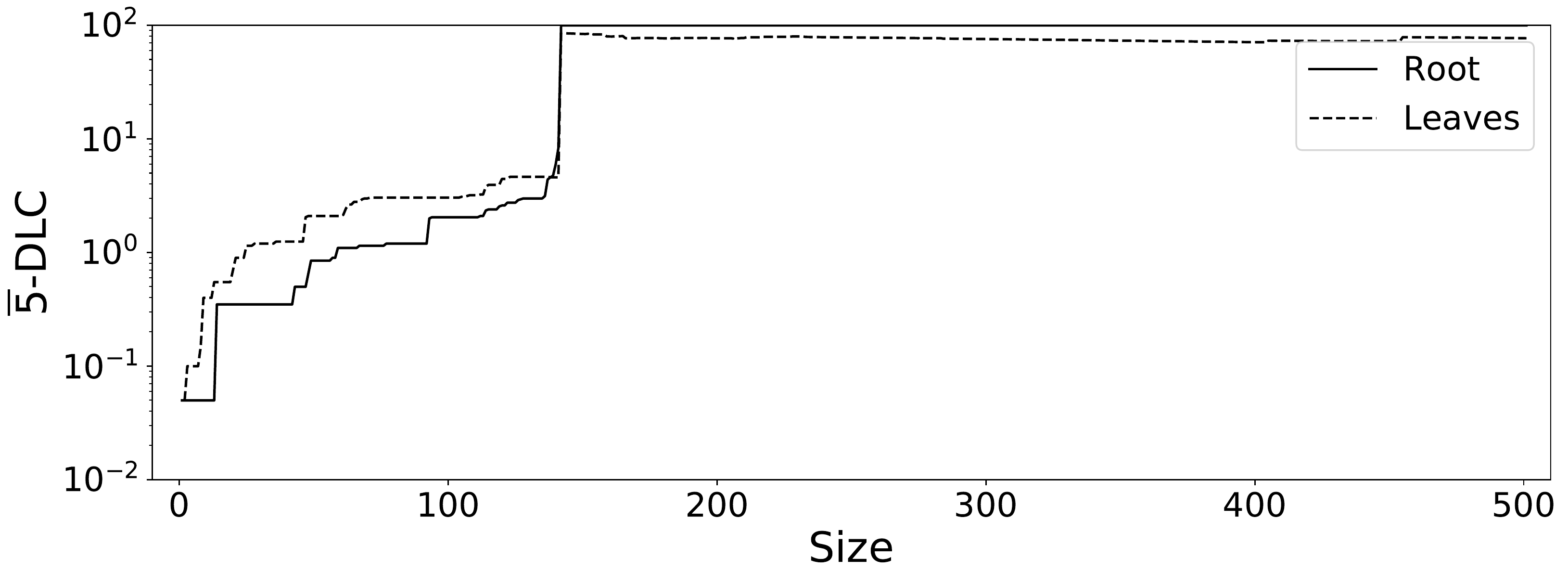}} 
  \subfloat[Star Injection for $5$-nDLC \label{F:SIRW}]{\includegraphics[width = .5\linewidth]{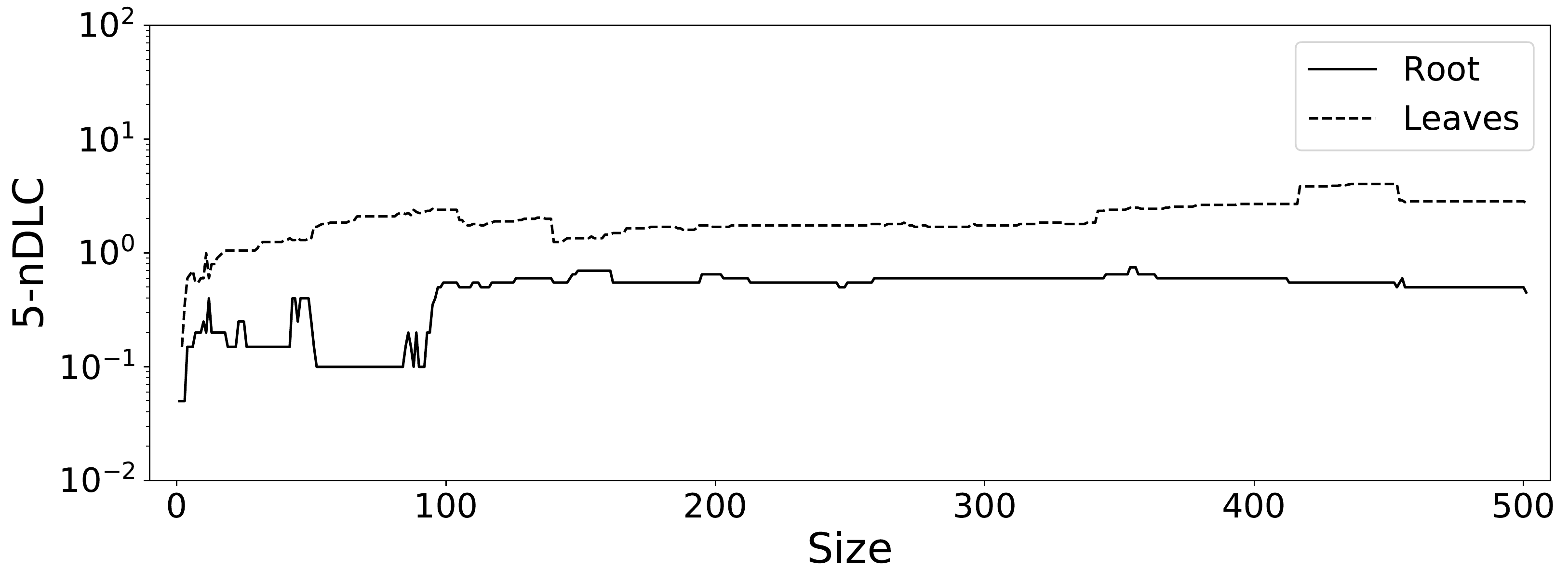} }\\
\subfloat[Clique Injection for $\overline{5}$-DLC \label{F:CIL}]{\includegraphics[width = .5\linewidth]{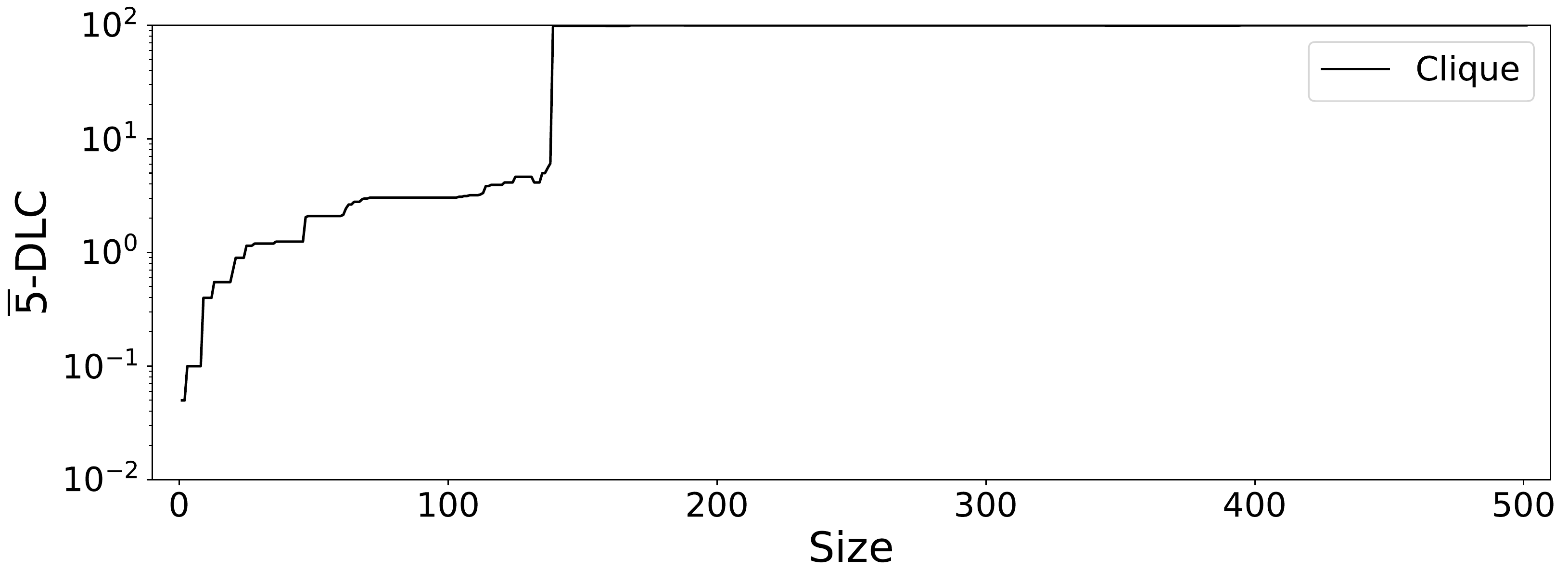}}
  \subfloat[Clique Injection for $5$-nDLC \label{F:CIRW}]{\includegraphics[width = .5\linewidth]{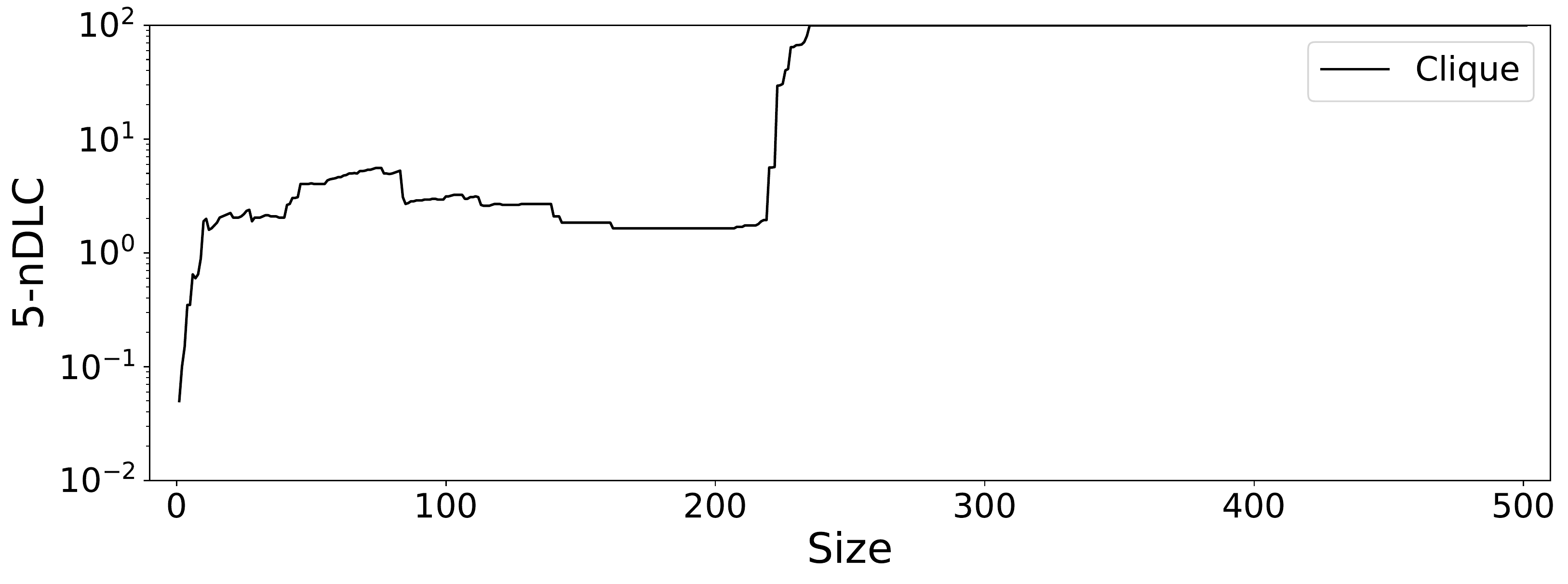} }\\
 \caption{Importance percentile for vertices in injected anomalies as a function of anomaly size.}\label{F:injection}
  \end{figure}

The most striking result of this experiment is that when the injection size is larger than approximately $7.5\%$ of the graph for $\overline{5}$-DLC (Figures \ref{F:SIL} and \ref{F:CIL}) the {\it average} importance of the participating vertices is at or above the $99^{\textrm{th}}$ percentile, representing an increase in importance percentile of more than 3 orders of magnitude.  
A similar increase occurs for the $5$-nDLC under clique injection when the anomaly size is approximately $12.5\%$ of the graph. 
Further, the results of these experiments also clearly indicate that both importance measures can be sensitive to even fairly small injections.  
For instance, when the injection size is approximately $1\%$ of the size of the graph, both importance measures register an order of magnitude increase for the star injection and an approximately 2 orders of magnitude increase for the clique injection.

In the negative direction, it is clear by examining Figure \ref{F:SIRW} that beyond an initial increase in relative importance, the nDLC measure is relatively insensitive to the size of the injected star graph. 
However, this is consistent with the known behavior of the normalized Laplacian.  
Specifically, by applying the theory of quasi-randomness, it can be shown that the spectrum of the normalized Laplacian is sensitive to presence of small cycles in the graph~\cite{Chung:QuasirandomDegSeq,Chung:QuasirandomSparse,Chung:Quasirandom,Chung:QuasirandomPNAS}.  
As a star graph contains no cycles, any small cycles that emerge necessarily have a significant contribution from the underlying graph.  
Further, as the example graph has relatively few edges, it is natural to expect that, even for quite large injections of a star, there will be relatively few short cycles formed in the graph.
  
\subsection{Anomaly Injection Among Random Vertices}\label{subsect:random}

In this next experiment we randomly choose vertices to be involved in the anomaly and consider the results over many trials. 
In this sense, this experiment reflects average case behavior in a static graph, and accounts for the effects of vertex choice in the prior experiment. 
{\it A priori}, there is no reason to believe that the importance measures are independent of the precise set of vertices involved in anomalous behavior.  
In particular, it is possible that the choice of vertex order in the experiments depicted in Figure \ref{F:injection} is a nontrivial factor in the results.  
Thus, to test the sensitivity of these results an additional series of experiments injecting a star anomaly is performed. 
Namely, as before, the  LANL flow graph depicted in Figure \ref{fig:LANL_1min} is utilized as a test case in which structural changes are planted, and the subsequent changes in importance scores are measured over many trials.
The methodology for a single trial is as follows:
\begin{enumerate}
\item Randomly select a vertex $v$.
\item Randomly choose a vertex subset $S$ with $k\%$ of the graph's vertices.
\item For any $s \in S$ that is not already connected to $v$, connect $v$ to $s$.
\item Compute importance score and score percentile of $v$ before and after these additional edge insertions.
\end{enumerate}

By measuring the change in importance score of $v$ for different levels of $k$, this experiment tests the importance measure in two regards: (1) whether the importance measure can {\it detect} the aforementioned structural change (as indicated by a increase in the importance score); and (2) whether the importance measure is {\it sensitive} to the intensity of these changes (as indicated by larger increases in the importance score for larger values of $k$).  In order to reduce the overall computational load and limit the effects of averaging over a large set of vertices, this experiment is restricted to studying the importance of the root vertices when a star is injected.

\begin{table}[h]
\centering
\begin{tabular}{rl||cc|lc|lc|lc|lc}
\toprule
&& \multicolumn{2}{c}{\bf 0.1\%} & \multicolumn{2}{c}{\bf 0.5\%} & \multicolumn{2}{c}{\bf 1.0\%} & \multicolumn{2}{c}{\bf 5.0\%} & \multicolumn{2}{c}{\bf 10.0\%}\\

& & \multicolumn{2}{c}{\footnotesize(2 edges)} & \multicolumn{2}{c}{\footnotesize(10 edges)} & \multicolumn{2}{c}{\footnotesize(20 edges)} & \multicolumn{2}{c}{\footnotesize(100 edges)} & \multicolumn{2}{c}{\footnotesize(201 edges)}\\
& & \multicolumn{1}{c}{\it \footnotesize Score} & \multicolumn{1}{c}{\it  \footnotesize PCTL}  & \multicolumn{1}{c}{\it \footnotesize Score} & \multicolumn{1}{c}{\it \footnotesize PCTL}  & \multicolumn{1}{c}{\it \footnotesize Score} & \multicolumn{1}{c}{\it \footnotesize PCTL} & \multicolumn{1}{c}{\it \footnotesize Score} & \multicolumn{1}{c}{\it \footnotesize PCTL}  & \multicolumn{1}{c}{\it \footnotesize Score} & \multicolumn{1}{c}{\it \footnotesize PCTL}      \\ \midrule

\multirow{3}{*}{\rotatebox[origin=c]{90}{\footnotesize $\aunderbrace[L1U1R]{\overline{5}\mbox{-$\DLC$}}$}}& Before &\texttt{2.30} & {44\%} &  \texttt{1.08} & {45\%}   & \texttt{1.11}  & {45\%}  & \texttt{0.83}  & {43\%}  & \texttt{2.04}  & {46\%}   \\
 & After &\texttt{2.31} & {55\%} & \texttt{1.11}  & {64\%}  & \texttt{1.19}  & {67\%}  & \texttt{1.47}  & {74\%}  & \texttt{205.68}  & {99\%}   \\
& Change &\texttt{0.01} & {11\%} & \texttt{0.03}  & {20\%}  & \texttt{0.08} & {22\%}   & \texttt{0.64}  & {31\%}  & \texttt{203.63}  & {53\%}   \\
\rule{0pt}{3ex}
\multirow{3}{*}{\rotatebox[origin=c]{90}{\footnotesize $\aunderbrace[L1U1R]{5\mbox{-$\nDLC$}}$}}& Before &\texttt{-4e-5} & {50\%} &  \texttt{-7e-5} & {52\%}   & \texttt{-9e-7}  & {50\%}  & \texttt{5e-5}  & {49\%}  & \texttt{-9e-6}  & {51\%}   \\
 & After &\texttt{-8e-4} & {1\%} & \texttt{-2e-3}  & {1\%}  & \texttt{-4e-3}  & {1\%}  & \texttt{-2e-2}  & {1\%}  & \texttt{-3e-2}  & {1\%}   \\
& Change &\texttt{-7e-4} & {-49\%} & \texttt{-2e-3}  & {-51\%}  & \texttt{-4e-3} & {-49\%}   & \texttt{-2e-2}  & {-48\%}  & \texttt{-3e-2}  & {-50\%}   \\
\bottomrule
\end{tabular}
\caption{Importance scores for a vertex before and after star injection according to the proportion of other vertices to which the chosen root vertex connects. The results are averaged over 500 trials.   } \label{tab:injScan}
\end{table}

The results of the experiment are presented in Table \ref{tab:injScan}. 
The experiment was run for $k=$0.1, 0.5, 1, 5, and 10\%, which corresponds to star anomalies with 2, 10, 20, 100, and 201 edges incident to the root $v$. 
The table presents the mean scores and mean percentile before and after the star is inserted, along with the mean change, over 500 trials. 
Turning attention first to the $\overline{5}$-DLC scores, we observe a positive net change.
Even for the smallest value of $k$ in which only two additional edges are added, the $\overline{5}$-DLC scores are responsive, registering an average score percentile increase of 11\%. 
Additionally, we note this change in percentile increases monotonically for each value of $k$, with the largest value of $k$ registering an increase of 53\% placing the root vertex in the $99^{\textrm{th}}$ percentile. 
These results suggest $\overline{5}$-DLC scores can both detect the star anomaly at minuscule levels of intensity, and reflect the relative magnitude of the star anomaly.

In contrast, the nDLC scores {\it decrease} in response to the star anomaly. 
Furthermore, these scores decrease more as the star anomaly intensity parameter $k$ increases. 
This suggests nDLC scores can detect, and are sensitive to, the star anomaly, but in the opposite direction of the DLC score values. 
To explain why this apparent ``inverted scale" occurs for nDLC, the aforementioned theory of quasi-random graphs \cite{Chung:QuasirandomDegSeq,Chung:QuasirandomSparse,Chung:Quasirandom,Chung:QuasirandomPNAS} is again relevant. 
By this theory, the prevalence of short cycles greatly impacts the spectrum, and the star injection is likely to increase the number of such cycles incident to the root vertex $v$. 
The net effect of this change on the nDLC score is ultimately negative because the normalized Laplacian matrix shifts and inverts the sign of the spectrum of the matrix $D^{-1/2}AD^{-1/2}$. 
It is worth noting these observations are in agreement with the results presented in Figure \ref{F:SIRW}, in which root vertices exhibited smaller nDLC values than leaf vertices. 
However, more research is necessary to better interpret nDLC scores, and to reconcile these observations with the clique experiment of Figure \ref{F:CIRW} in which the nDLC scores increase. 
This dichotomy suggests that nDLC scores can have seemingly counterintuitive interpretations, while DLC scores afford simpler interpretations in the contexts considered here.

\subsection{Anomaly Injection with Time-Evolution}\label{subsect:time}

The prior two experiments show that on a static graph the inclusion of an anomaly is reflected in a change in DLC or nDLC. 
However, for situational awareness one must compare a current snapshot with an anomaly to a {\it prior} snapshot without an anomaly. 
In this way we wish to see that the DLC or nDLC are also perturbed significantly in the presence of an anomaly injected at a specific time step, aiding an analyst in discovering the anomalous vertex when comparing importance scores across time steps. 
To that end, in our final experiment we assume a star anomaly occurs concurrently with the natural time-evolution of a network.
This tests the robustness of the measures in the presence of natural noise. 
In order to simulate the natural time-evolution of a cyber network, we require a temporal graph model. For this purpose, we utilize an extension of a model developed by Hagberg, Lemons, abd Misra that was designed to capture temporal dynamics observed in cyber data.

\subsubsection{Hagberg-Lemons-Misra Model}\label{sec:HLM}
Temporal variations in network traffic comprise a mix of fairly steady computer-generated traffic and more variable human-generated traffic. 
Overall, in the data we have observed, beyond some temporal variability (e.g., more traffic during the day than overnight) the structure of communications, as reflected by the degree distribution of graphs for small time windows, is fairly stable. 
The temporal graph model developed by Hagberg, Lemons, and Misra \cite{hagberg2016temporal}, which we refer to as the {\it HLM model}, allows the user to specify a degree sequence and builds on the Chung-Lu model \cite{chung2002average,chung2004average} which generates graphs consistent with a given expected degree sequence.
HLM accounts for temporal variability by ``masking'' some vertex pairs at each time step. 
Those edges in the current time step that correspond to vertex pairs not in the masking set are carried to the next time step, while vertex pairs that are in the masking set are included in the next time step based on independent Bernoulli trials, regardless of whether they correspond to edges in the prior time step. 
All time steps of the HLM evolution are equal in distribution, and equal to a Chung-Lu model with the initial degree sequence.
Recent work by the authors of the present paper, together with Jenne
\cite{aksoy2020inprep}, has extended this model to include a temporally varying degree distribution, now the Temporal HLM or {\it \tHLM} model, and shown how to rapidly generate \tHLM\ instances. 
In order to test our DLC and nDLC measures in realistic time-varying data we will use a \tHLM\ instance with parameters measured from the LANL data set.

While the \tHLM\ model can be based on a variety of random graph models, for these experiments we will consider the variant based on the Chung-Lu random graph model.  Formally, this begins with a degree sequence $w = \tup{w_v}_{v \in V}$ as an $n$-dimensional vector, where $n=|V|$ is the total number of vertices present across all graphs. 
The value $w_v$ is the expected degree of vertex $v$ across all time instances. 
The \tHLM\ model additionally requires a vector that captures temporal variation in graph density, $\tau = \tup{\tau_t}_{t=1}^T$. 
This corresponds to the expected degree of vertex $v$ at time step $t$ being $\tau_t \cdot w_v$ when there is no temporal correlation.  
A final necessary parameter, $\alpha \in [0,1]$ controls the aforementioned masking set and represents the amount of temporal correlation.  That is, if $\alpha = 1$ there is no correlation between successive graphs and if $\alpha = 0$, successive graphs are identical.  
One could consider $\alpha$ to be a vector in $[0,1]^{\binom{n}{2}}$, with one $\alpha$ value for each edge, but for simplicity in this experiment we use a single value for all edges. 
Given these parameters, the \tHLM\ model generates a sequence of graphs $G_1, G_2, \ldots, G_T$.
The graph at time $t$, $G_{t+1}$, is formed from $G_t$ using the $\alpha$ parameter to control for how much of $G_t$ is carried forwards. Specifically, a masking set $M_t$ is generated where each pair of vertices, $\{u,v\}$, is in $M_t$ independently with probability $\alpha$. 
Any pair $\{u,v\} \notin M_t$ which is an edge in $G_t$ is carried forward to be an edge in $G_{t+1}$, and any unmasked pair of vertices which is not an edge in $G_t$ will not be in $G_{t+1}$. 
Then, any pair of vertices $\{u,v\} \in M_t$ is present as an edge in $G_{t+1}$ independently with probability\footnote{In order to avoid probabilities greater than 1 we actually use $\min\left\{1, \frac{\tau_{t+1}w_u w_v}{\rho}\right\}$ as the probability here.} $\frac{\tau_{t+1}w_u w_v}{\rho}$, regardless of whether $\{u,v\}$ was an edge in $G_t$, where $\rho = \sum_{v \in V} w_v$. 
Putting this all together we see that the probability of an edge in $G_{t+1}$ is given by
\[ 
P(\{u,v\} \in G_{t+1}) =
\left\{
\begin{array}{ll}
1-\alpha + \alpha\cdot \frac{\tau_{t+1}w_u w_v}{\rho}    & \{u,v\} \in G_t \\
\alpha\cdot \frac{\tau_{t+1}w_u w_v}{\rho}     &  \{u,v\} \notin G_t
\end{array}
\right..
\]
Unlike in the HLM model, which does not allow the degree sequence to be time varying, $G_t$ is not quite equal in distribution to  $\mathcal{G}(\tau_t\cdot w)$, the Chung-Lu model with expected degree sequence given by $\tau_t\cdot w$.
Rather, the edge $\{u,v\}$ is present in $G_t$ with probability
$(1-\alpha)^t \frac{\tau_1 w_u w_v}{\rho} + \sum_{i=1}^t \alpha(1-\alpha)^{t-i} \frac{\tau_i w_u w_v}{\rho}$.

\subsubsection{Experimental Results} 
To determine the parameters of the \tHLM\ model into which we will inject the anomaly, we take a series of 15 snapshots of one-minute traffic intervals in a 5-minute window surrounding our example LANL graph (each of these snapshots is shifted 20 seconds in time).  
The $w$ vector of \tHLM\ is determined by the average degree sequence (rounded up) of the vertices across these snapshots (with vertices which do not appear in a snapshot counting as having degree 0).  
The resulting weighting vector has 3,987 entries, with a maximum value of 617, and a average value of $1.74$.\footnote{It is worth noting that, because of the extreme sparsity and highly skewed nature of the degree sequence, a standard Chung-Lu model with this parameter has maximum expected degree significantly smaller than $617$.  In fact the expected maximum degree is approximately $446$ and the expected average degree is $1.62$.  To be reflective of this difference we will refer to the value of a vertex in the $w$ as its weight, instead of the more typical usage of expected degree.} 
To mimic the natural circadian rhythms expected in a cyber-security system, the temporal weighting sequence, $\tau$, is given by 500 equally spaced evaluations of $\nicefrac{(3-\cos(x))}{2}$ over the range $[0, 4\pi]$. 
The resulting degree distribution and temporal sequence are illustrated in Figure \ref{F:tHLM_W}.

\begin{figure}
    \centering
    \hfill
    \subfloat[$\log$-$\log$ Complimentary Cumulative Weight Sequence]{\includegraphics[width = .4\linewidth]{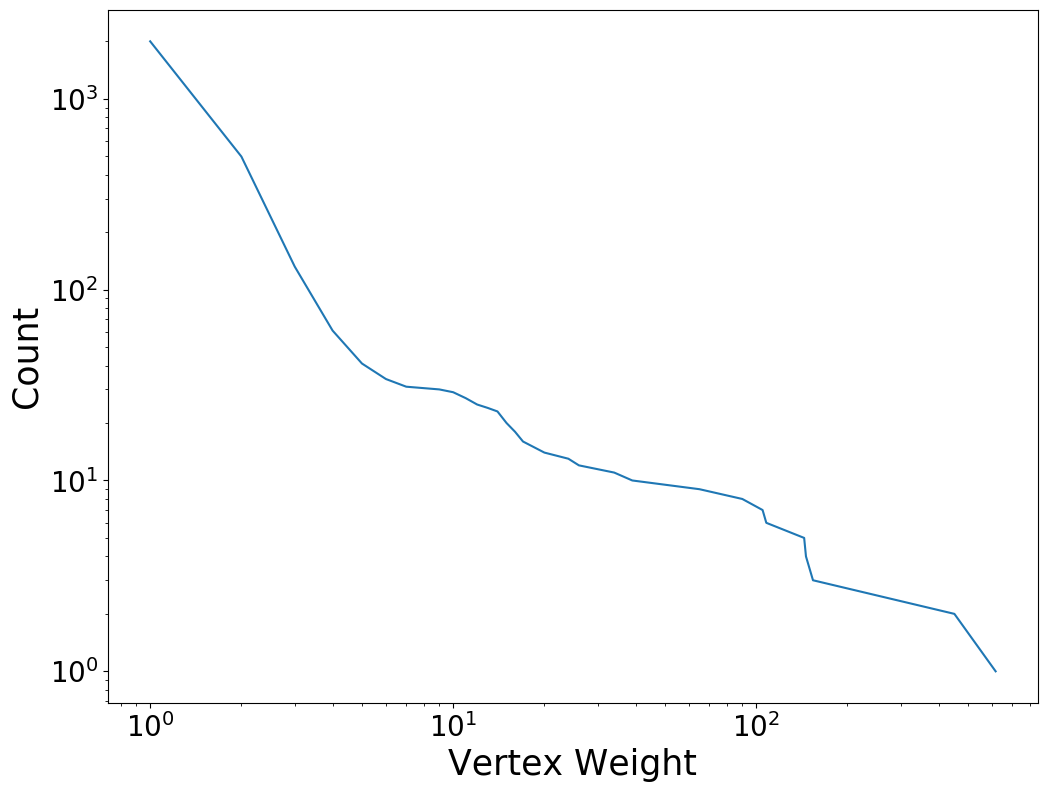}} \hfill
    \subfloat[Temporal Sequence]{\includegraphics[width = .4\linewidth]{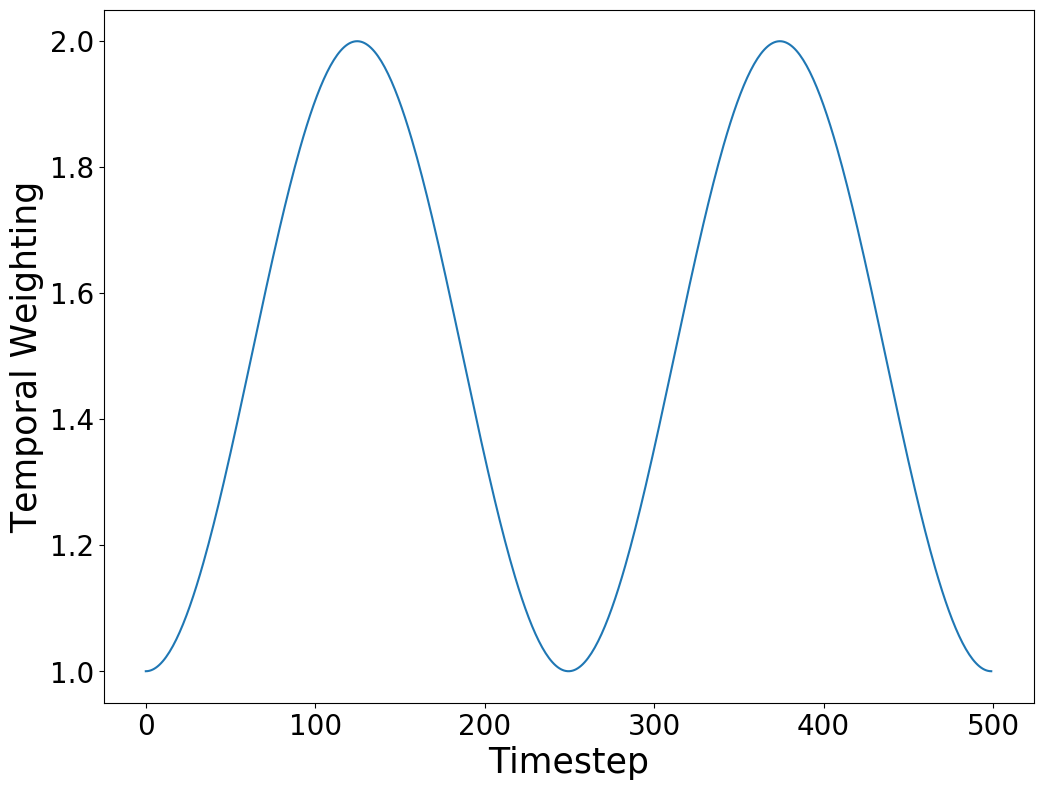}}
    \hfill \phantom{}
    \caption{Parameters for \tHLM\ model}
    \label{F:tHLM_W}
\end{figure}

To simulate a relatively slowly changing environment we set the evolution parameter at $\alpha=0.05$ which means that every edge has the potential to change state approximately 25 times during the course of the evolution.  
The result of the \tHLM\ model with these parameters is a sequence of 500 graphs with significant variation of parameters as shown in Table \ref{T:tHLM}. 
The evolution of these parameters is shown in Figure \ref{F:tHLM}.
\begin{table}[]
    \centering
    \begin{tabular}{l |c c c} 
         & Minimum & Maximum & Mean \\ \hline
         Giant Component Size & 2169& 3473& 2949.3 \\
         Giant Component Edges  &2863 & 6282 & 4669.7 \\
         Maximum Degree & 452 & 896 & 660.9 \\
         Total Edges & 3123 & 6327 & 4795.1 \\
         
    \end{tabular}
    \caption{Statistics for graphs in \tHLM\ sequence.}
    \label{T:tHLM}
\end{table}

\begin{figure}
    \centering
    \hfill
    \subfloat[Giant Component Size]{\includegraphics[width = .3\linewidth]{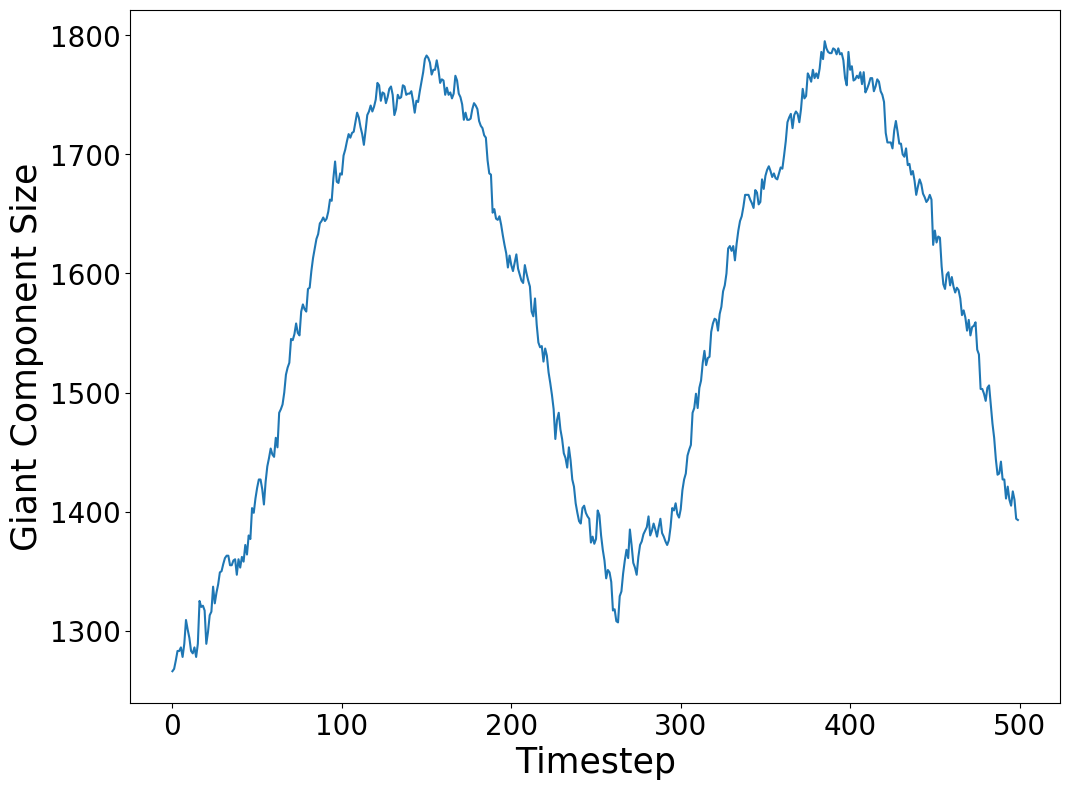}}
    \hfill
    \subfloat[Number of Edges]{\includegraphics[width = .3\linewidth]{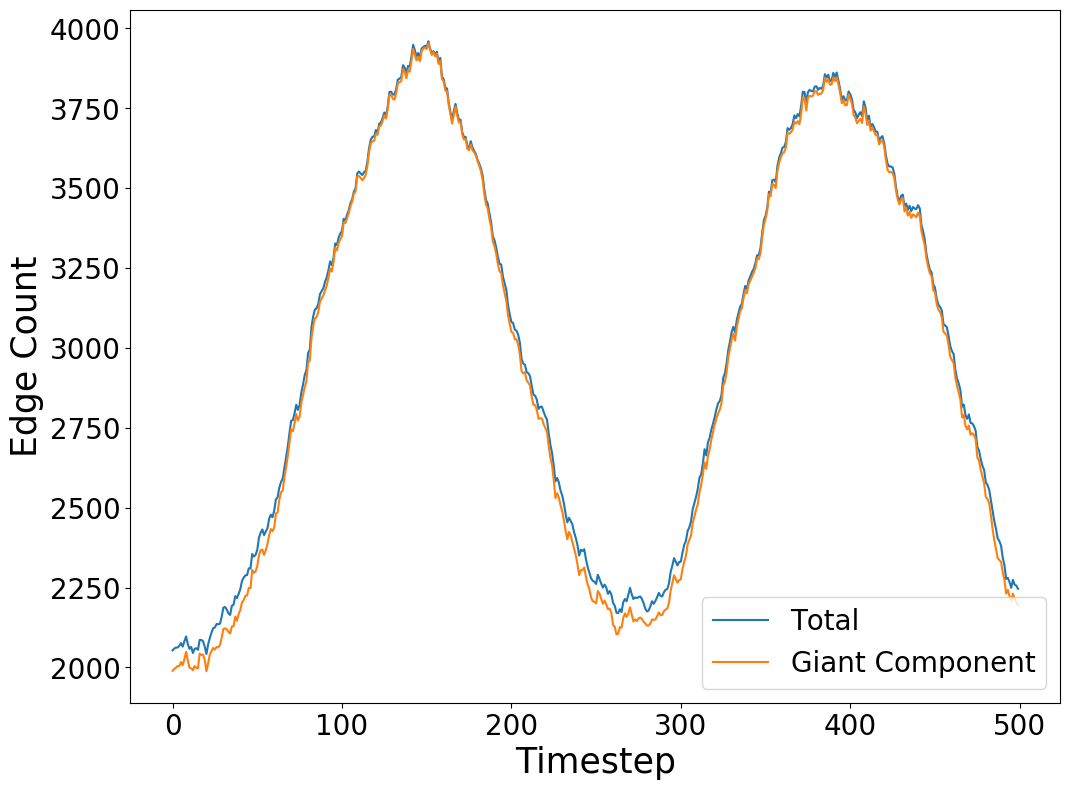}}
    \hfill
    \subfloat[Top 5 Degrees]{\includegraphics[width = .3\linewidth]{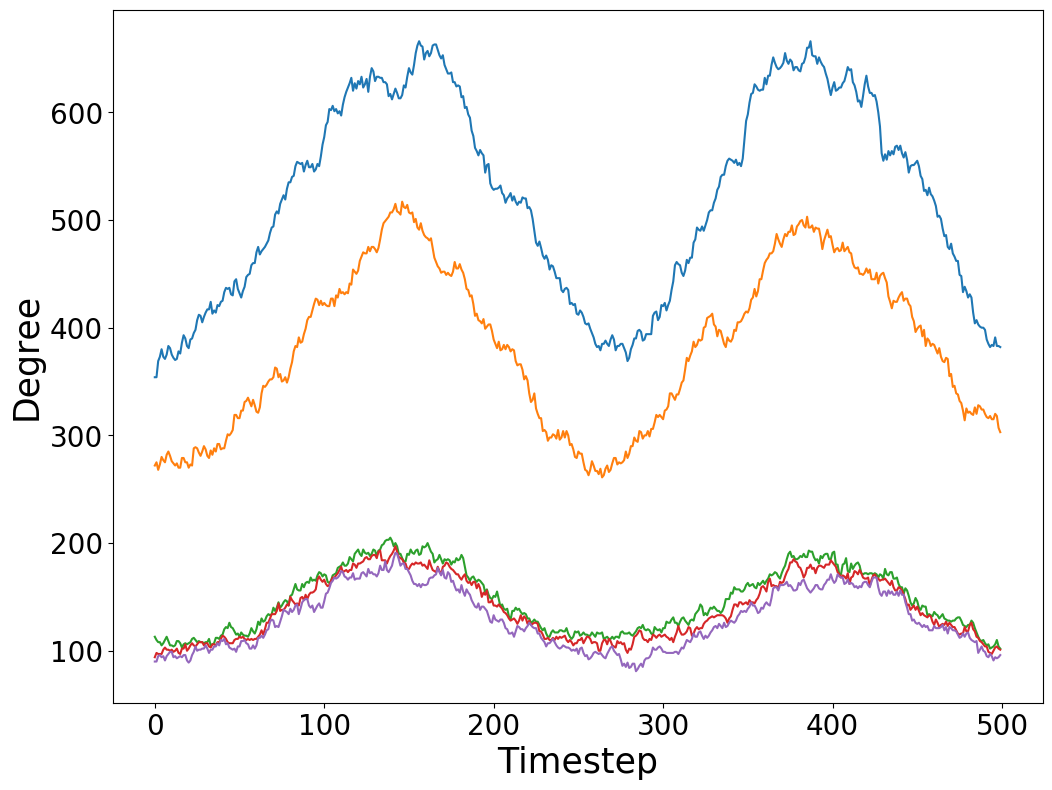}}
    \hfill\phantom{}
    \caption{Summary Statistics for Generated \tHLM\ Model}
    \label{F:tHLM}
\end{figure}

For each time step in the sequence of graphs we will compare the generated graph at time $t$ with the same graph after injecting a single star anomaly with 30 leaves. 
For consistencies sake we will be injecting the anomaly to the same vertices across all time steps.
Before selecting the vertices that will participate in the injected anomaly, we observe that almost all vertices have some time step in the \tHLM\ model where they are in a small component (i.e., not in the giant component).  
As the block structure of the combinatorial and normalized Laplacians respects the component structure of the underlying graph, typical eigenspaces are associated with a single component and are identically zero on all other components.  
The sole exception to this is the case where multiple components have a common eigenvalue, in which case there does exist a basis for the associated eigenspace where all basis vectors are non-zero on all the components with the common eigenvalue.  
Even for these eigenspaces it is typical, for computational reasons, to choose a basis that respects the underlying component structure. 
We will respect this convention in our eigenbasis decomposition.  
As a consequence, when considering $\bar{5}\mbox{-}\DLC$ and $5\mbox{-}\nDLC$, at any particular time step there are at most 5 components whose vertices have non-zero importance.  
In practice, there is typically only a single component whose vertices have non-zero importance, the giant component.  
As a consequence, the injection of any anomaly may trivially change the importance of the vertices involved by adding them to the giant component.  
To avoid this issue, we will choose the anomaly from the set of 66 vertices which are in the giant component for every time step.  
This set is naturally biased to those vertices of higher weight, with an average weight of $36.1$ and so we further refine this set by remove the top $20\%$ of vertices in terms of weight resulting in a set $X$ of candidate vertices for the anomaly with an average weight of $7.1$. 
The root and thirty leaves forming the anomaly, $\mathcal{A}$, are then selected randomly from the remaining 53 vertices.

Now, given the observed graph $G_t$ at some time step $t$, we wish to understand how the behavior of $\bar{5}\mbox{-}\DLC$ and $5\mbox{-}\nDLC$ differs between $G_{t+1}$ and $G_{t+1} + \mathcal{A}.$  
To this end, for each vertex $v$ in $\mathcal{A}$ and for each time step $t$, we identify a set of vertices $S_{v}^{(t)}$ such that for all $s \in S_v$ the percentile ranking of the (normalized) DLC in $G_t$ differs from $v$ by at most $2.5\%$ and $s$ has a non-zero importance in $G_{t+1}$.\footnote{The sets $S_v$ have an average size of 182.4 and 190.4, for $\bar{5}\mbox{-}\DLC$ and $5\mbox{-}\nDLC$, respectively.}  We refer to this set as the \emph{cohort} associated with $v$.
The former condition allows us to aggregate all vertices sufficiently similar to $v$, while the second condition accounts for the fact that the vertices of $\mathcal{A}$ were chosen in such a way that they have non-zero importance for all time steps.  
By considering the percentiles of the (normalized) DLC for vertices in $S_v$, we can recover a "typical" evolution of the importance of a vertex like $v$ after one time step.

\begin{figure}
    \centering
    \hfill
    \subfloat[Kernel Density Estimate for the distribution of the gap between the scores with and without the injected anomaly. \label{SF:gap}]{\includegraphics[width=.45\linewidth]{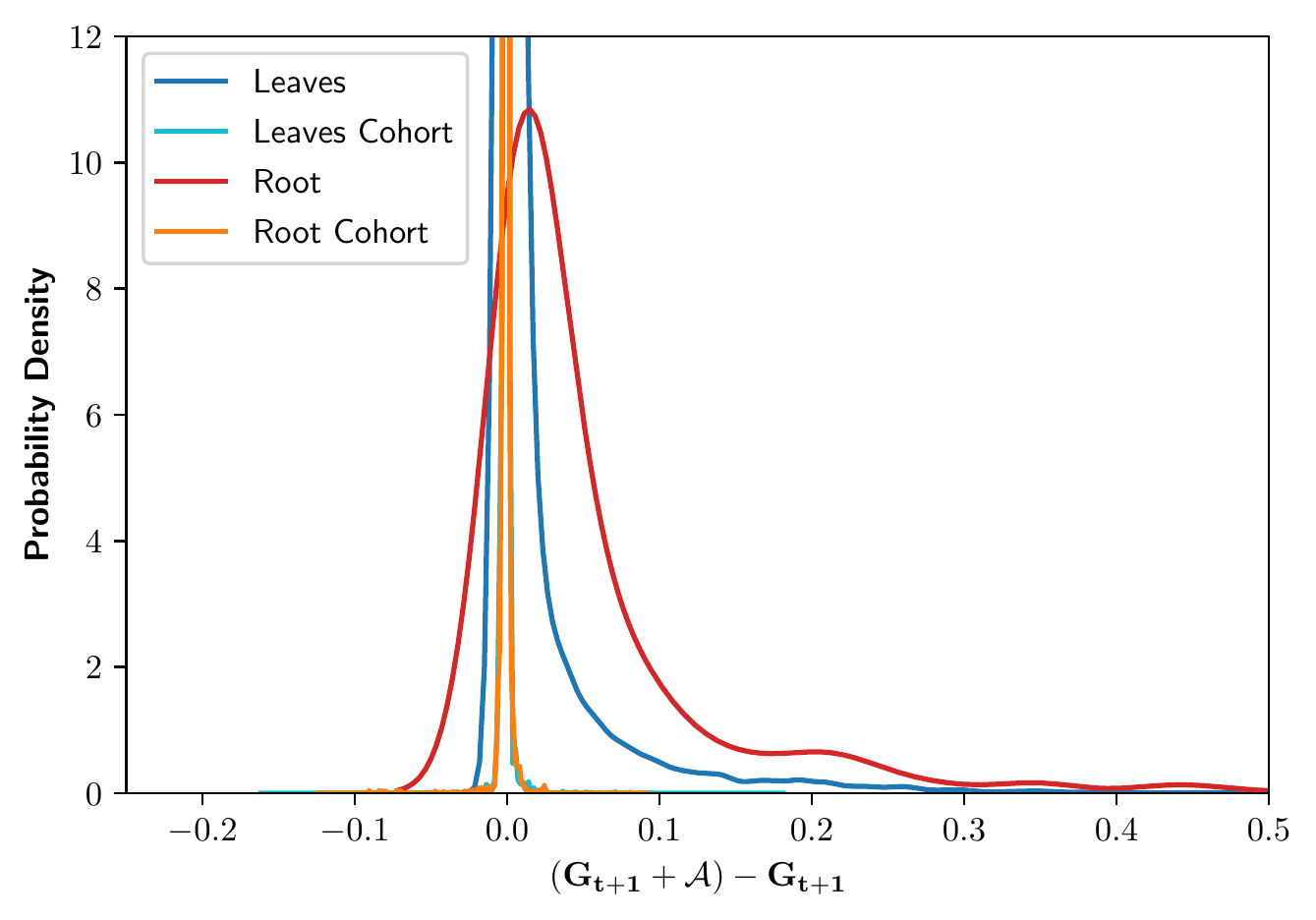}}
    \hfill
    \subfloat[Kernel Density Estimate for the cumulative density function for the gap between the scores with and without the injected anomaly.\label{SF:gap_CDF}]{\includegraphics[width=.45\linewidth]{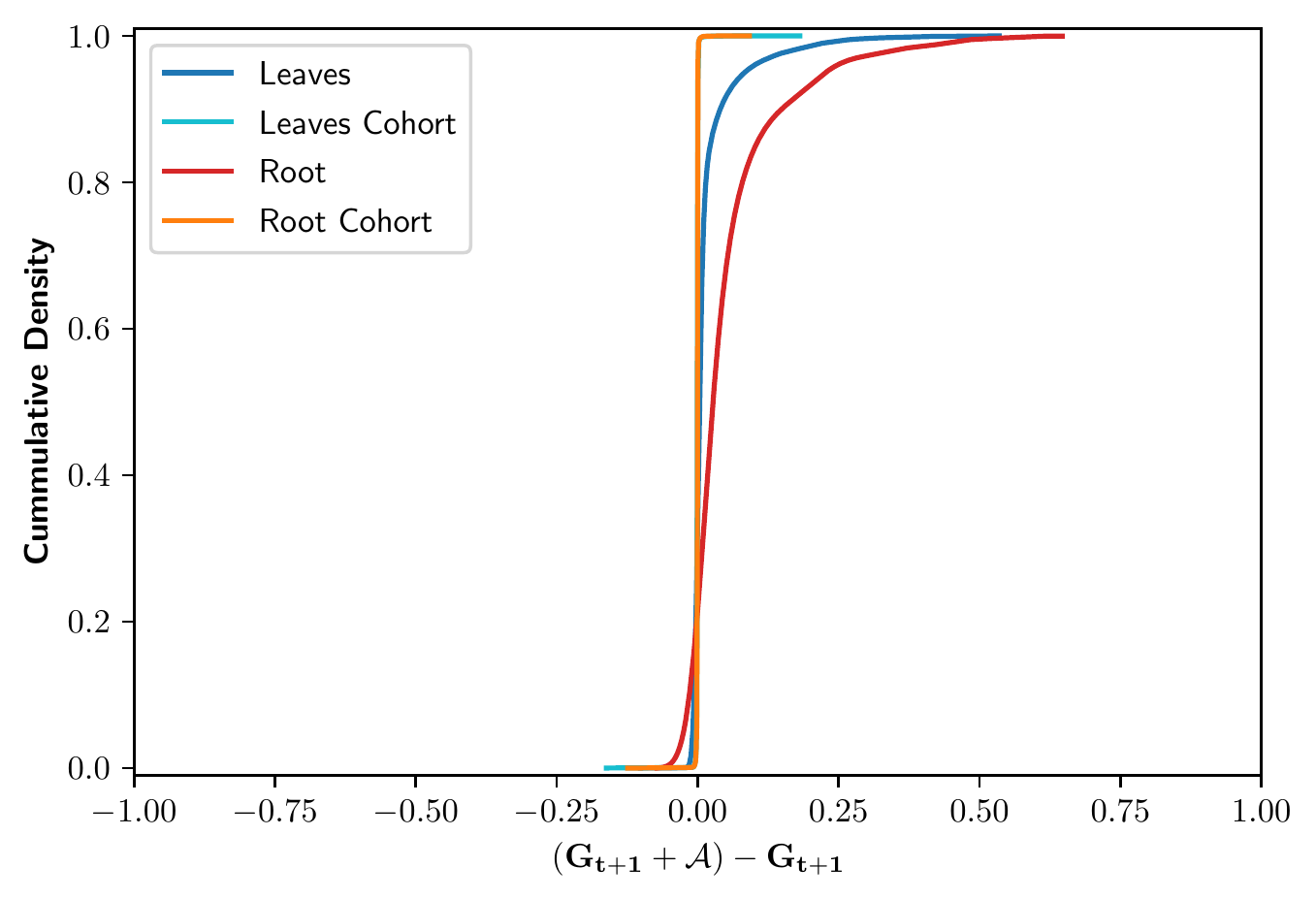}}
    \hfill\phantom{}\\
    \hfill
    \subfloat[Kernel Density Estimate for the distribution of the gap in between the scores with the injected anomaly and the score in previous time step.\label{SF:evo}]{\includegraphics[width=.45\linewidth]{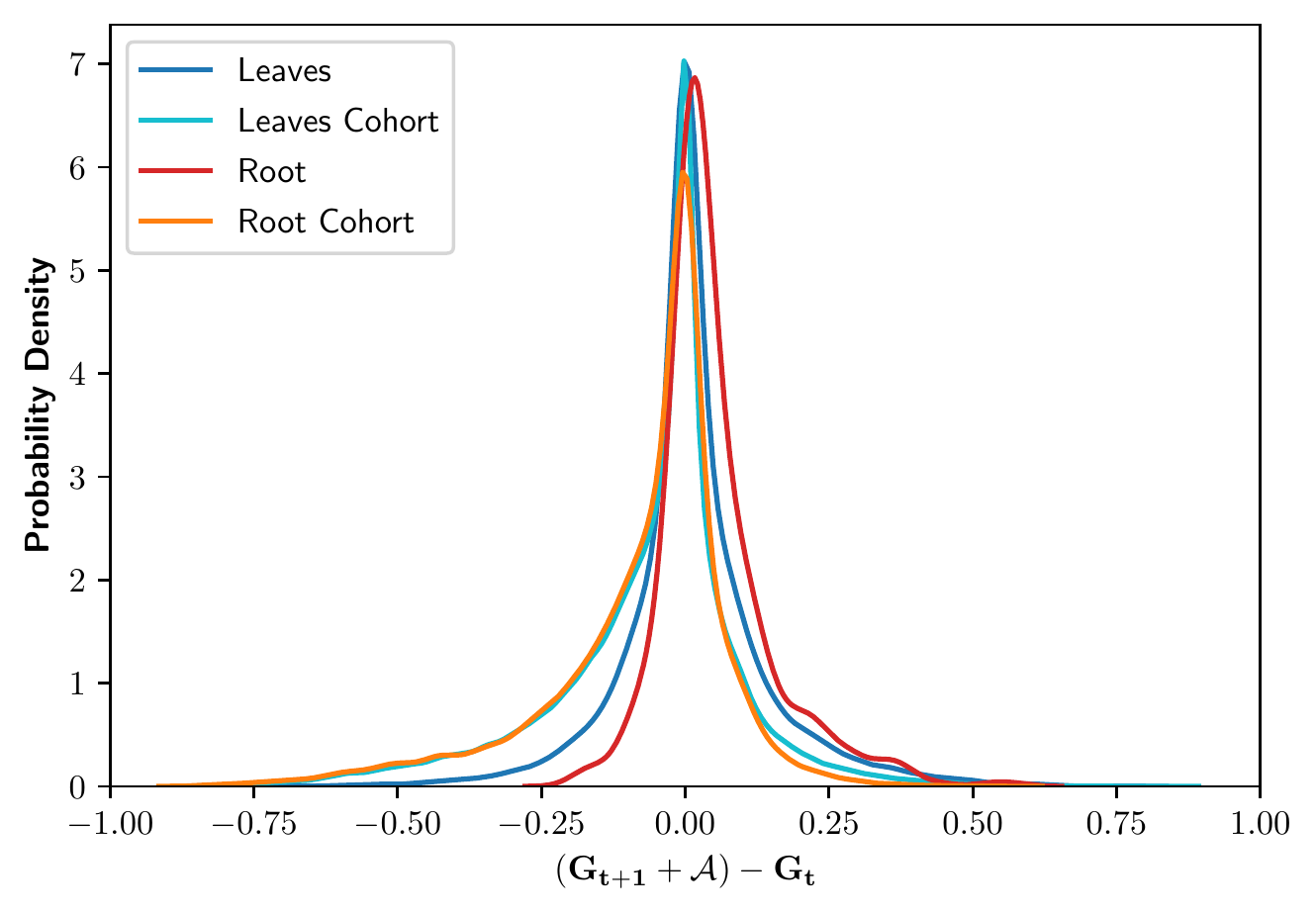}}
    \hfill
    \subfloat[Kernel Density Estimate for the cumulative density function for the gap in between the scores with the injected anomaly and the score in previous time step.\label{SF:evo_CDF}]{\includegraphics[width=.45\linewidth]{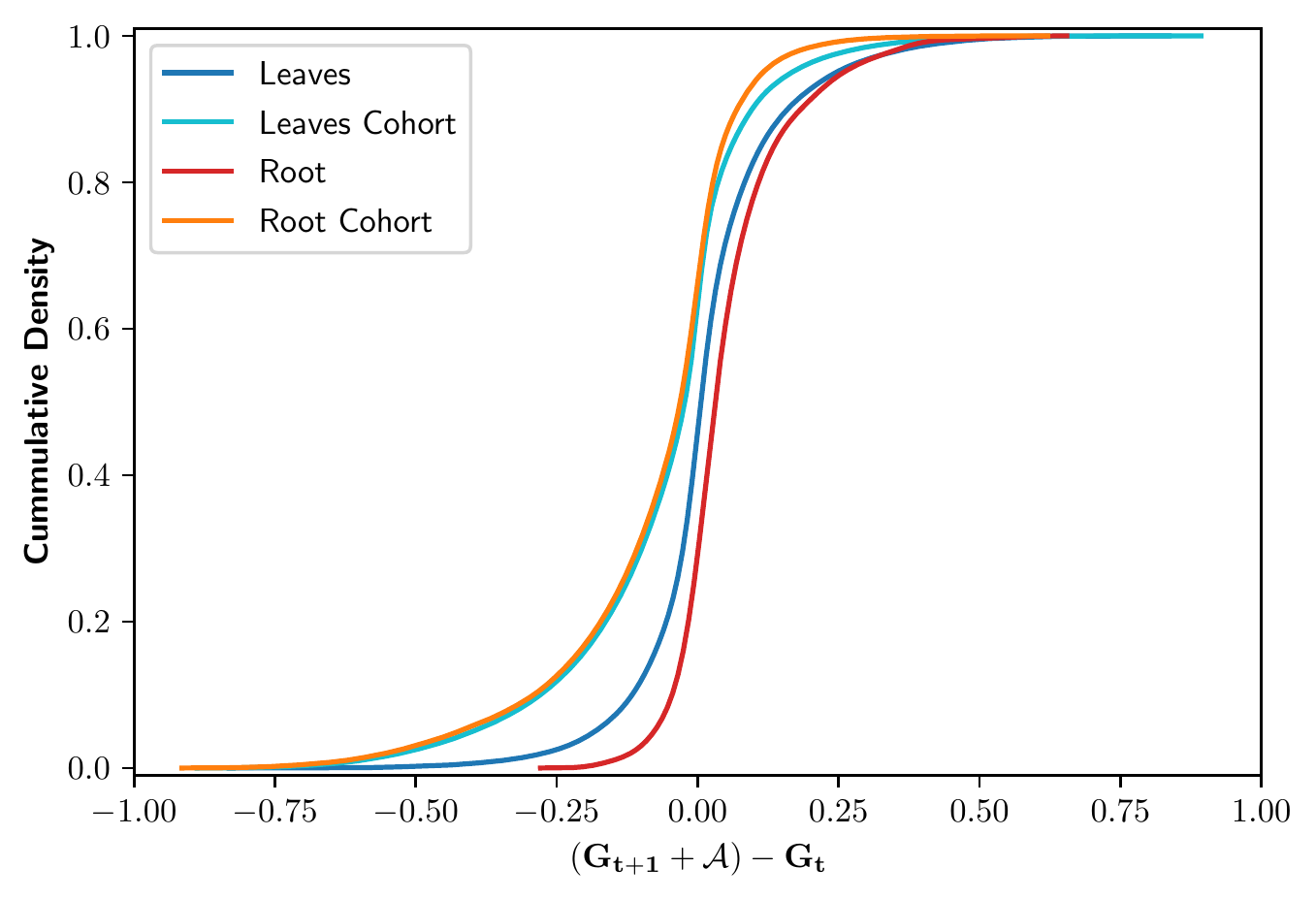}}
    \hfill\phantom{}
    \caption{$\bar{5}\mbox{-}\DLC$}
    \label{F:tHLM_DLC}
\end{figure}

In Figure \ref{F:tHLM_DLC} we compare the behavior of the percentile DLC score for the anomaly vertices versus the vertices in the associated cohort.  Specifically, in Figures \ref{SF:gap} and \ref{SF:gap_CDF}, we show the distribution and the cumulative distribution, respectively, of the gap between the percentile DLC score with an anomaly present and the percentile score without the anomaly.  In Figures \ref{SF:evo} and \ref{SF:evo_CDF}, we instead compare the percentile DLC score with an anomaly with the score in the prior time step.  As Figure \ref{SF:gap_CDF} emphasizes, there is relatively little difference between the percentile scores for the cohort vertices with or without the anomaly present.  In contrast, from Figure \ref{SF:gap} the percentile scores for the vertices participating in the anomaly are typically about 2.5\% (for the leaves) to 5\% (for the root) higher than they would be without the anomaly.  In contrast, in Figures \ref{SF:evo} and \ref{SF:evo_CDF} we see that for both the anomaly vertices and the associated cohorts, the little typical change in percentile between the previous time step and the anomaly time step.  However, from Figure \ref{SF:evo} we can see that the vertices in the cohort of the anomaly have a significantly larger probability of having a sharp decrease in the percentile ranking.  Taken together, this indicates that there is likely a tendency to of the scores of vertices to naturally revert to the median behavior under the natural evolution, however an injected anomaly somewhat moderates this tendency in a localized way (i.e., while the vertices participating in the anomaly resist reverting towards the median, this effect does not carry over to similarly situated vertices).

\begin{figure}
    \centering
    \hfill
    \subfloat[Kernel Density Estimate for the distribution of the gap between the scores with and without the injected anomaly.]{\includegraphics[width=.45\linewidth]{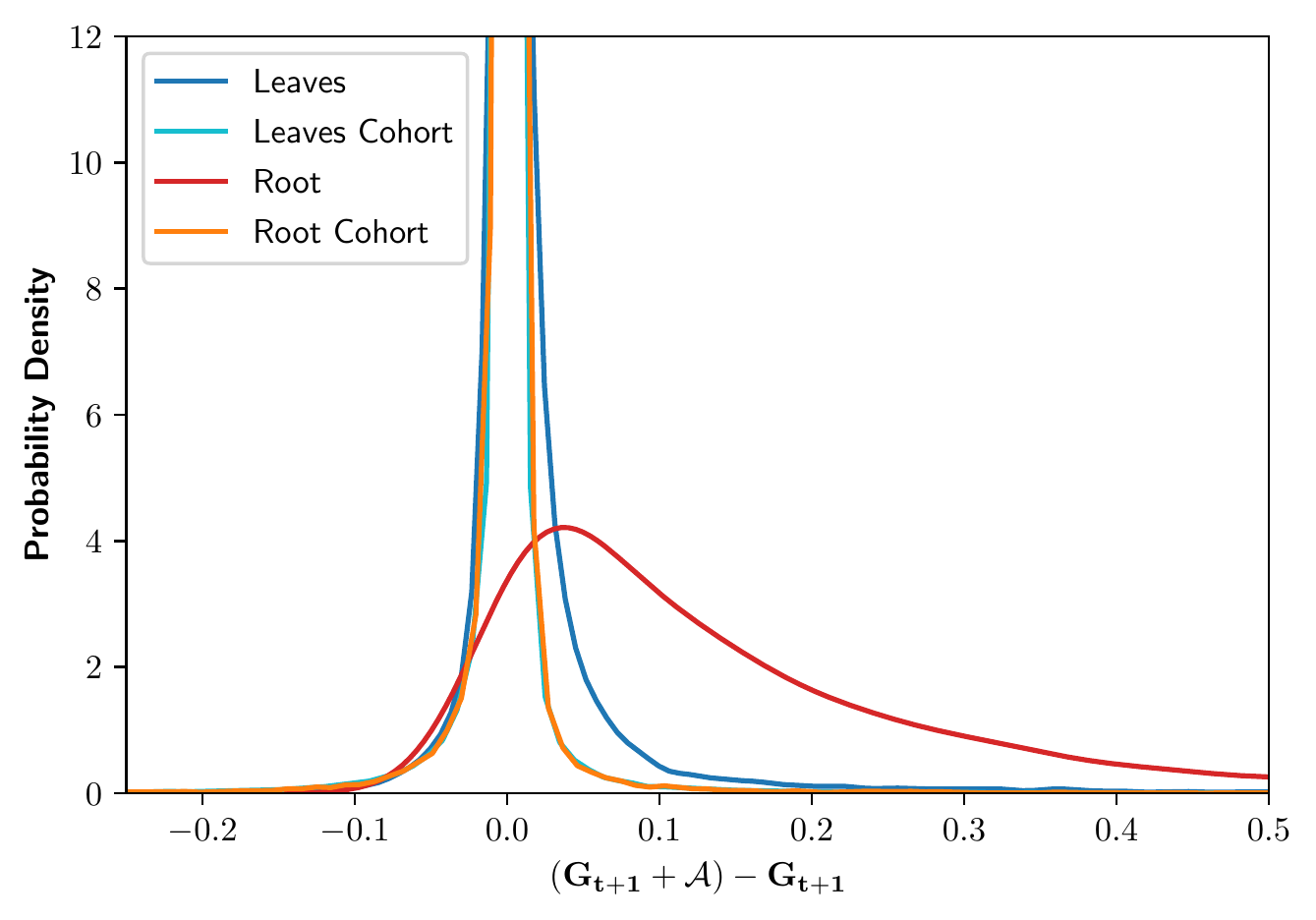}}
    \hfill
    \subfloat[Kernel Density Estimate for cumulative density function for the gap between the scores with and without the injected anomaly.]{\includegraphics[width=.45\linewidth]{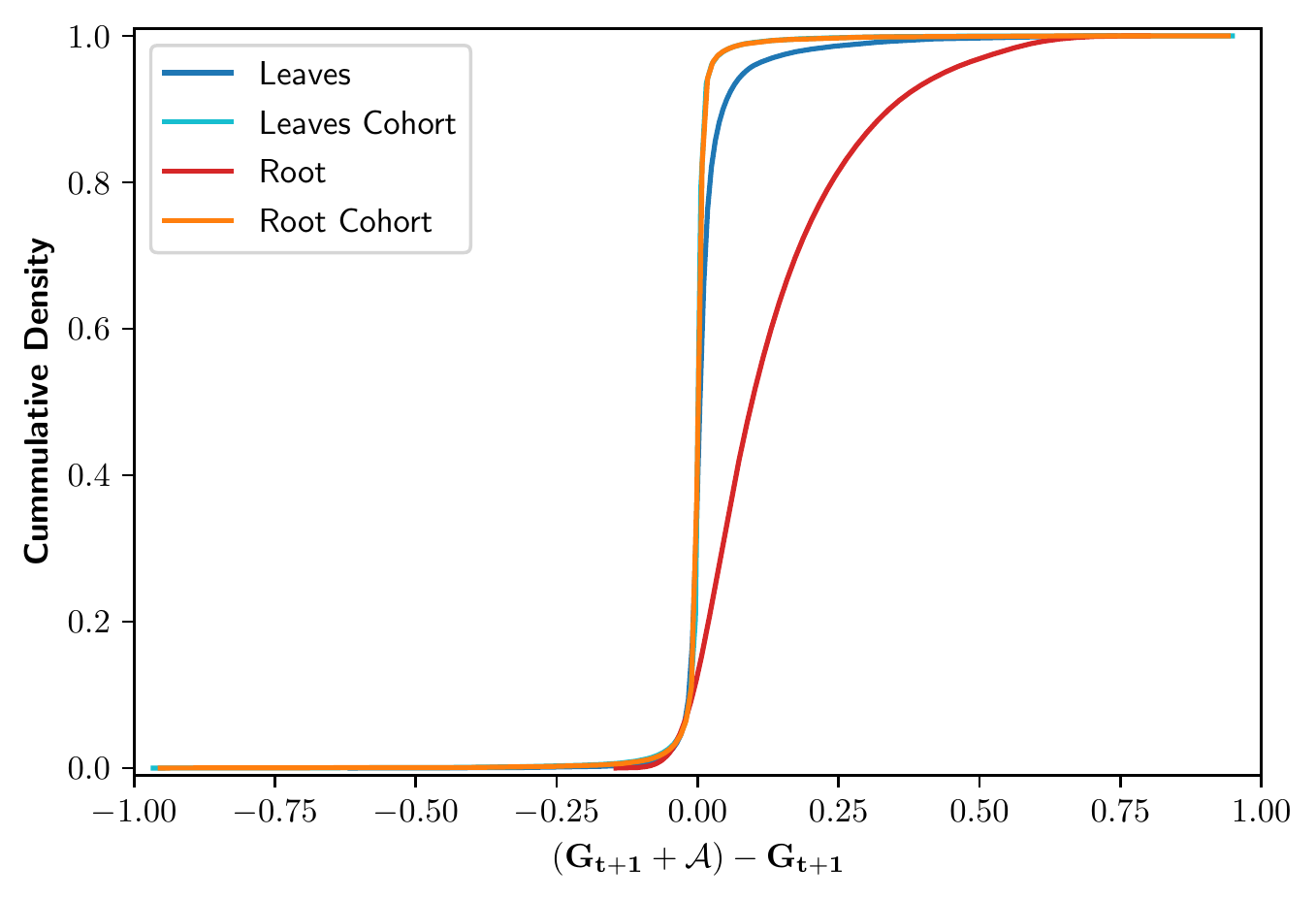}}
    \hfill\phantom{}\\
    \hfill
    \subfloat[Kernel Density Estimate for the distribution of the gap between the scores with the injected anomaly and the score in previous time step.]{\includegraphics[width=.45\linewidth]{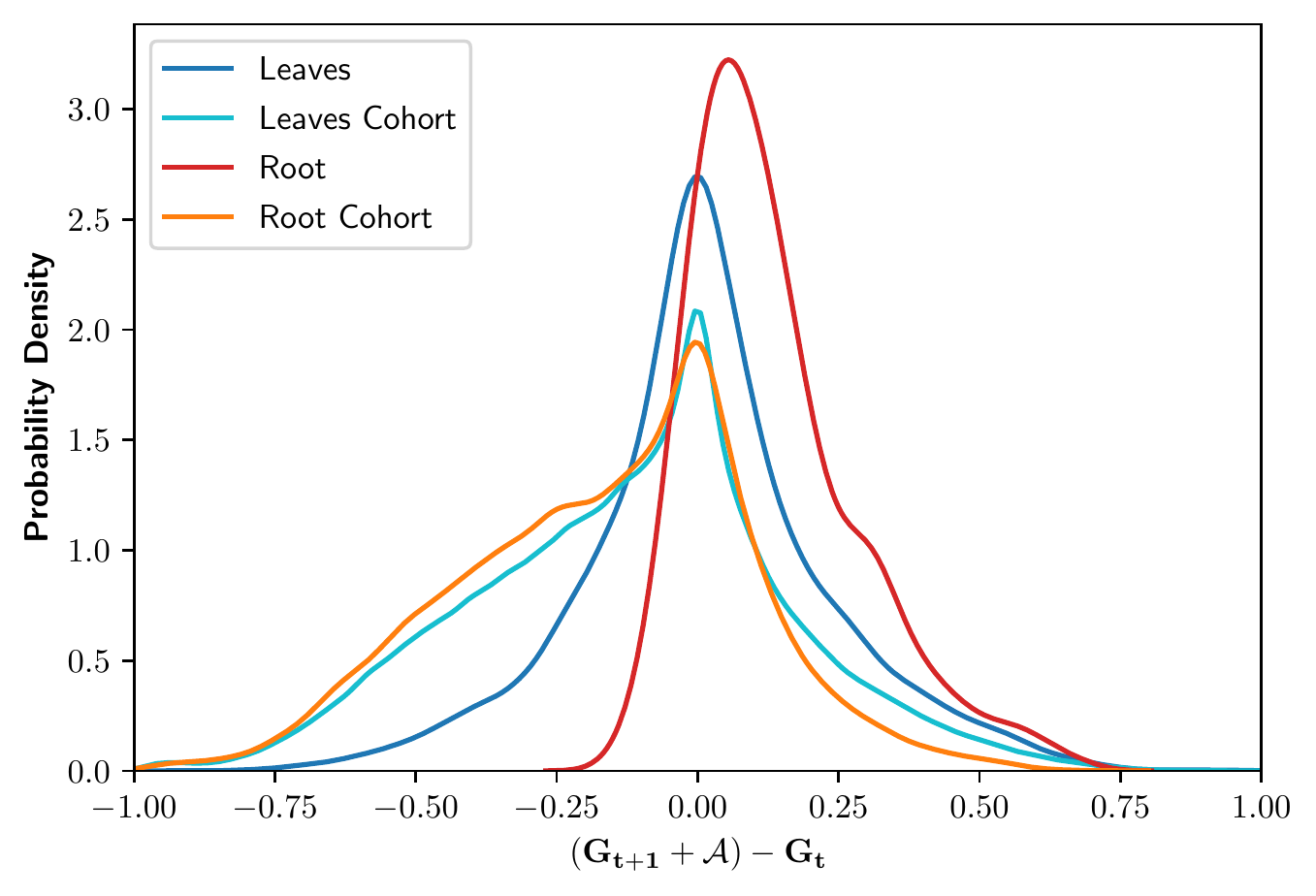}}
    \hfill
    \subfloat[Kernel Density Estimate for the cumulative density function of the gap between the scores with the injected anomaly and the score in previous time step.]{\includegraphics[width=.45\linewidth]{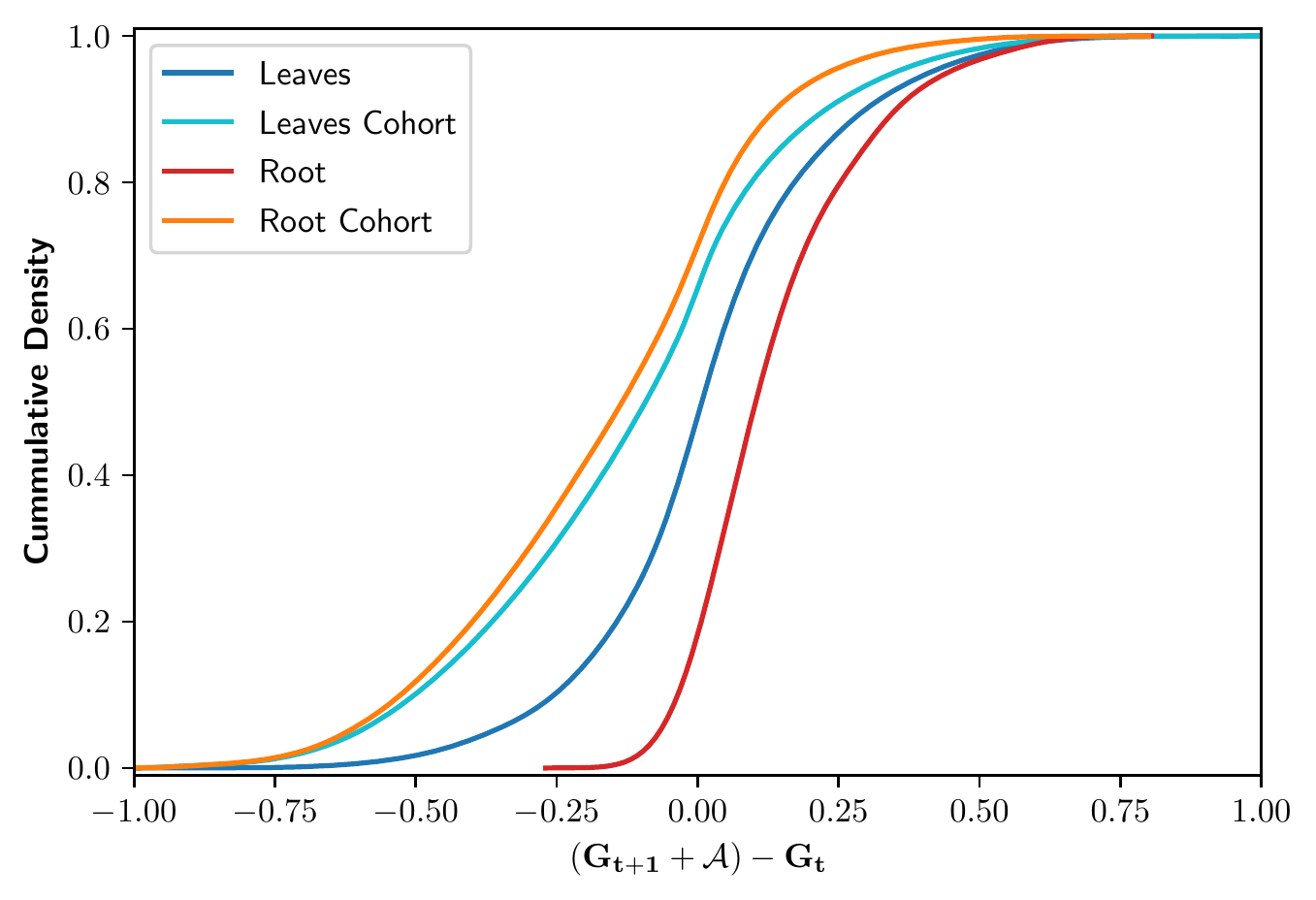}}
    \hfill\phantom{}
    \caption{$\bar{5}\mbox{-}\nDLC$}
    \label{F:tHLM_nDLC}
\end{figure}

The figures provided in Figure \ref{F:tHLM_nDLC} are analogous to those in Figure \ref{F:tHLM_DLC} except for being applied to the normalized DLC percentiles. We see that qualitatively, the results for the normalized DLC percentile are similar to those for the DLC percentile.  Quantitatively, Figure \ref{F:tHLM_nDLC} indicates that the  variability of for the normalized DLC percentiles is larger than that of the DLC, but the injection of the anomaly provides are larger boost to the percentile scores, particularly for the root of the anomaly. 

\section{Discussion}

This paper introduced two new spectral-based vertex centrality measures, DLC and nDLC, and used the setting of network flow data to illustrate their sensitivity and ability to identify anomalies planted in graphs.
Our new centralities are related to existing vertex centralities, most closely to the work of Qi, et al. \cite{qi2012laplacian}, but use a more nuanced approach by measuring changes to eigenvalues in the face of infinitesimal changes to a given vertex.
This allows for more subtle structural features in a graph to have measurable centrality.

Our experiments in Section \ref{sec:exp} show how the DLC and nDLC measures can be used to provide attribution of certain kinds of anomalies to cyber analysts.
In the first experiment we see that the percentiles of both DLC and nDLC, calculated using the top 5 eigenvalues, show significant increases when a star or clique anomaly is injected onto low importance vertices. 
This is true even when the injected anomaly is a rather small percentage of the graph. 
In the second experiment, rather than considering only low importance vertices, we inject a star anomaly of varying size randomly into the graph. 
Again, we see that even for small anomalies there is a measurable change in both DLC and nDLC percentiles for the root vertex of the star. 
These two experiments taken together show that on average, even small star anomalies manifest measurable changes in both centrality measures.
This indicates that looking for vertices that have a large increase in centrality from one time step to the next may narrow the set of IP addresses to investigate for malicious behavior.

Those first two experiments did not take into account temporal variation in the network. 
Rather, we compared a static, typical network flow graph $G$ to that same graph with an anomaly added.
In the third experiment we added the additional complexity of temporal variation. 
Using the \tHLM\ model to create a dynamic graph with properties similar to daily rhythm of network flow data, we injected a small star anomaly onto a set of vertices that are in what one could consider the {\it core} of the graph, i.e., vertices that are in the giant component for all time steps.
We showed that when comparing the DLC and nDLC of vertices in $G_{t+1}+\mathcal{A}$ to both $G_{t+1}$ and $G_t$ those vertices participating in the anomaly were more perturbed than their ``cohort'' vertices (those vertices having roughly the same centrality score in $G_t$).
This difference was more pronounced for nDLC than DLC, but still visually evident in both.
Given this observation we hypothesize that DLC and nDLC could be used to detect an anomaly of this type by tracking cohorts over time and identifying when vertices change cohorts.

Situational awareness comprises understanding the current posture of the network and predicting the security states of future time points. 
This paper provides a tool for identifying anomalies in a network in order to assess the current network state.
While our measure does not provide a high level overview of the network's security posture, the ability of these measures to point to even small anomalies may aid operators in preventing those from becoming large anomalies. 

\paragraph{\bf Acknowledgements} The authors would like to thank Helen Jenne for helpful discussions.
PNNL Information Release PNNL-SA-155008.

\bibliographystyle{plain}
\bibliography{wave}
\end{document}